\date{}
\newcommand{\splitatcommas}[1]{\begingroup
  \begingroup\lccode`~=`, \lowercase{\endgroup
    \edef~{\mathchar\the\mathcode`, \penalty0 \noexpand\hspace{0pt plus 1em}}}\mathcode`,="8000 #1\endgroup
}
\newtheorem{problem}{Problem}
\newtheorem{theorem}{Theorem}
\newtheorem{proposition}{Proposition}  
\newtheorem{definition}{Definition}
\newtheorem{lemma}[theorem]{Lemma}
\newtheorem{corollary}{Corollary}
\global\long\def\lst#1#2{#1_{1},#1_{2},\dots,#1_{#2}}
\global\long\def\lstp#1#2{#1_{1}+#1_{2}+\dots+#1_{#2}}
\global\long\def\lstl#1#2{#1_{1}<#1_{2}<\dots<#1_{#2}}
\global\long\def\lsst#1#2{\splitatcommas{\sqrt{#1_{1}},\sqrt{#1_{2}},\dots,\sqrt{#1_{#2}}}}
\DeclareRobustCommand\onedot{\futurelet\@let@token\@onedot}
\def\@onedot{\ifx\@let@token.\else.\null\fi\xspace}
\def\ie{\emph{i.e}\onedot} 
\DeclareMathOperator{\spn}{span}
\newcommand{\iu}{{i\mkern1mu}}
\newcommand\abs[1]{\left|#1\right|}
\newcommand\parc[1]{\left(#1\right)}
\newcommand{\OF}{\mathcal{O}}
\newcommand{\ord}{\operatorname{ord}}
\newcommand{\posslp}{\operatorname{PosSLP}}
\newcommand{\SSR}{\operatorname{SSR}}
\newcommand{\SSReq}{\operatorname{SSR}_{\operatorname{eq}}}
\newcommand{\SSRslp}{\operatorname{SSR}_{\operatorname{SLP}}}
\newcommand{\equslp}{\operatorname{EquSLP}}
\newcommand{\pit}{\operatorname{PIT}}
\newcommand{\eqdef}{\vcentcolon =}
\newcommand{\BK}{\mathbb{K}}
\newcommand{\Z}{\mathbb{Z}}
\crefname{definition}{Definition}{Definitions}
\crefname{theorem}{Theorem}{Theorems}
\crefname{lemma}{Lemma}{Lemmas}
\crefname{problem}{Problem}{Problems}
\crefname{equation}{Equation}{Equations}
\crefname{proposition}{Proposition}{Propositions}
\crefname{conjecture}{Conjecture}{Conjectures}
\crefname{corollary}{Corollary}{Corollaries}
\crefname{appendix}{Appendix}{Appendices}
\crefname{algorithm}{Algorithm}{Algorithms}
\crefname{section}{Section}{Sections}
\crefname{figure}{Figure}{Figures}
 \providecommand{\keywords}[1]
{\noindent \textbf{Keywords } #1
}
\author{Louis Gaillard \thanks{{\'E}cole Normale Sup{\'e}rieure de Lyon. Email: \texttt{louis.gaillard@ens-lyon.fr}} \and Gorav Jindal \thanks{Max Planck Institute for Software Systems. Email: \texttt{gjindal@mpi-sws.org}}}
\begin{document}

  \title{On the Order of Power Series and the Sum of Square Roots Problem}

\maketitle
\begin{abstract}

  This  paper focuses on the study of the order of power series that are linear combinations of a given finite set of power series. The order of a formal power series, known as $\ord(f)$, is defined as the minimum exponent of $x$ that has a non-zero coefficient in $f(x)$. Our first result is that the order of the Wronskian of these power series is equivalent up to a polynomial factor, to the maximum order which occurs in the linear combination of these power series. This implies that the Wronskian approach used in (Kayal and Saha, TOCT'2012) to upper bound the order of sum of square roots is optimal up to a polynomial blowup. We also demonstrate similar upper bounds, similar to those of (Kayal and Saha, TOCT'2012), for the order of power series in a variety of other scenarios. We also solve a special case of the inequality testing problem outlined in (Etessami et al., TOCT'2014).

  In the second part of the paper, we study the equality variant of the sum of square roots problem, which is decidable in polynomial time due to (Bl\"omer, FOCS'1991). We investigate a natural generalization of this problem when the input integers are given as straight line programs. Under the assumption of the Generalized Riemann Hypothesis (GRH), we show that this problem can be reduced to the so-called ``one dimensional'' variant. We identify the key mathematical challenges for solving this ``one dimensional'' variant.
\end{abstract}
\keywords{Formal power series, Sum of square roots, Wronskian, Differential equations, Straight line Programs.}

\noindent \textbf{2012 ACM Subject Classification }Theory of computation $\to$ Algebraic complexity theory, Mathematics of computing $\to$ Probabilistic algorithms.
 \maketitle

\section{Introduction}

For numerous decision problems that require determining the sign of expressions with real numbers, their complexity class (e.g., if they belong to $\P$ or not) is often unknown. A notable instance is the Sum of Square Roots problem, which can be described as:	

\begin{problem} [Sum of Square Roots ($\SSR$)] Given a list $(\lst a n)$ of positive integers  and $(\lst \delta n) \in \{-1,1\}^n$, decide whether
$\sum_{i=1}^n \delta_i \sqrt{a_i} > 0$.  \label{pb:SSR} \end{problem}

The complexity of this problem has been extensively investigated and remains an open question according to Garey, Graham, and Johnson~\cite{garey1976some}. Additionally, it has been hypothesized that it lies in $\P$, as proposed by Malajovich in 2001~\cite{Malajovich2001AnEV}. The $\SSR$ problem shares deep connections with classical geometric problems such as the Euclidean Traveling Salesman Problem (ETSP),  which is not known to be in $\NP$. It is readily seen to be in $\NP$ relative to an oracle of $\SSR$. An important related problem where the task is to determine the sign of an integer (encoded by a straight line program), is the so called $\posslp$ problem. A sequence $P$ of integers $(b_0, \lst b \ell)$ is said
to be a straight line program (SLP) if $b_{0}=1$ and for all $1\leq i\leq\ell$, $b_{i}=b_{j}\circ_i b_{k}$, where $\circ_i\in\{+,-,*\}$ and $j,k<i$. We say that this SLP $P$ computes the integer  $b_{\ell}$. We say that $\ell$ is the size (or length) of this SLP $P$.
\begin{problem} [$\posslp$] Given an SLP $P$, decide if the integer $n_P$ computed by $P$ is positive. \label{pb:posslp}
\end{problem}
The approximation of $\sqrt{a_i}$ in \cref{pb:SSR} to an appropriate precision leads to the reduction of $\SSR$ to $\posslp$, as demonstrated in \cite{comp_num_analysis}. $\posslp$ was introduced in this work to bridge the gap between classical models of computation and computation over the reals in the Blum-Shub-Smale (BSS) model \cite{SmaleRealCompu1997}, which is a widely studied model for the study of computational complexity in numerical analysis.

It was shown in \cite{comp_num_analysis} that $\posslp$ captures the ``Boolean part'' of languages decidable in the BSS model in polynomial time, with polynomial time Turing reductions. Additionally, $\posslp$ was proven to lie within the counting hierarchy $\CH$, implying that $\SSR$ also falls within $\CH$. To date, this represents the best known upper bound for the complexity of $\SSR$.

Our contributions can be found in~\cref{sec:our_results} after having introduced some useful previous work on ~\cref{pb:SSR} in~\cref{sec:relatedwork}.
\subsection{Related work}\label{sec:relatedwork}

In different scenarios, there is a need to determine if the sum of square roots is equal to zero or not, rather than determining the sign of the expression. As a result, we encounter an intriguing problem known as $\SSReq$.

\begin{problem}[$\SSReq$] Given a list $(\lst a n)$ of integers and $(\lst \delta n) \in \{-1,1\}^n$, decide whether $\sum_{i=1}^n \delta_i \sqrt{a_i} = 0$.
\label{pb:SSReq} 
\end{problem}
Bl{\"o}mer~\cite{blomer1993computing} gave a polynomial time algorithm to compute sums of radicals and also proved that $\SSReq$ is in $\P$. Thus, testing if a signed sum
of square roots is zero seems to be easier than deciding its sign.  A natural idea for an algorithm for the $\SSR$ problem would be to approximate the sum with
a floating point number thanks to classical numerical algorithms. However, one would need a result on the required number of bits of precision of the
approximation to ensure that the two signs coincide. It is known (\cite{brent1976fast}) that for an integer of bit size at most $B$, its square root can be
approximated up to $m$ bits of precision in time $\mathrm{poly}(B,m)$. And this implies that a solution of the following number theoretic problem would lead to a
polynomial time algorithm for the $\SSR$ problem.
\begin{problem} [Lower-bound on a nonzero sum of square roots] Given a sum $S = \sum_{i=1}^n \delta_i \sqrt{a_i}$, with $\delta_i \in \{-1,1\}$ and $1 \le a_i <
  2^B$, can we find a polynomial $q(n,B)$ such that \begin{align*} S \neq 0 \implies \abs{S} \ge \frac{1}{2^{q(n,B)}}?  \end{align*}
\label{pb:SSR_lower_bd} 
\end{problem}
In contrast to \cref{pb:SSR_lower_bd}, one can also try to find $a_i$'s and corresponding $\delta_i$'s such that the absolute value of  $\sum_{i=1}^n \delta_i \sqrt{a_i}$  is \emph{small}. In this direction, it was shown in \cite{QIAN2006194} that $\abs{\sum_{i=0}^m\binom{m}{i} (-1)^i\sqrt{n+i}}=O(n^{-m+\frac{1}{2}})$. Kayal and Saha~\cite{kayal2012sum} chose to approach \cref{pb:SSR_lower_bd} by formulating a related question over polynomials. This approach proved to be simpler. They focused on non-zero sums of square roots of polynomials, which they viewed as power series, and demonstrated that the valuation (or order) of such a series cannot be \emph{too high}.

\begin{theorem}[Bounding the order of sum of square roots of polynomials~\cite{kayal2012sum}] \label{thm:order_SSR}
For $1 \le i \le n$, let $c_i \in \mathbb{C}$ and  $f_i, g_i \in \mathbb{C}[x]$ of degree at most $d$ with $f_i(0) \neq  0$. We denote and fix
$\sqrt{f_i(x)} \in \mathbb{C}[[x]]$ one of the two square roots of $f_i(x)$. If the sum $S(x) = \sum_{i=1}^{n} c_i g_i(x) \sqrt{f_i(x)} $ is non-zero, then $\ord(S)\leq dn^2+n$.
\end{theorem}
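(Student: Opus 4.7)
The plan is to follow the Wronskian approach. Setting $h_i := g_i\sqrt{f_i}$ so that $S = \sum_{i=1}^n c_i h_i$, the first step would be to reduce to the case where the $h_i$ are $\mathbb{C}$-linearly independent and every $c_i$ is nonzero; dropping linearly dependent $h_i$ or terms with $c_i = 0$ only decreases $n$, which tightens the target bound. Under this reduction, the Wronskian $W := W(h_1,\dots,h_n) = \det\bigl(h_j^{(i-1)}\bigr)_{1 \le i,j \le n}$ is a nonzero formal power series in $\mathbb{C}[[x]]$.

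Second, I would relate $\ord(S)$ to $\ord(W)$ via a Cramer-style identity. Let $M'$ be the matrix obtained from the Wronskian matrix by replacing its first column with $(S, S', \dots, S^{(n-1)})^{T}$. Multilinearity in the first column, combined with the vanishing of determinants with repeated columns, gives $\det(M') = c_1\, W$. Expanding $\det(M')$ along the first column then writes $c_1 W = \sum_{k=1}^n (-1)^{k+1} S^{(k-1)} N_k$, where each minor $N_k$ involves only $h_2,\dots,h_n$ and their derivatives, hence lies in $\mathbb{C}[[x]]$ and has $\ord(N_k) \ge 0$. Using the characteristic-zero identity $\ord(S^{(k)}) = \ord(S) - k$ (valid as long as $k \le \ord(S)$) together with $\ord(u+v) \ge \min(\ord u, \ord v)$, a short calculation yields $\ord(S) \le \ord(W) + n - 1$.

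The heart of the argument is a direct bound on $\ord(W)$ in terms of $d$ and $n$. Since $\sqrt{f_i}$ satisfies the ODE $2 f_i y' = f_i' y$, an easy induction on $k$ gives $h_i^{(k)} = A_{i,k}(x)\, f_i(x)^{-k}\sqrt{f_i(x)}$ for polynomials $A_{i,k}$ satisfying a recurrence of the form $A_{i,k+1} = 2 A_{i,k}'\, f_i + (1-2k)\, A_{i,k}\, f_i'$, from which $\deg(A_{i,k}) \le d + k(d-1)$. Pulling $\sqrt{f_1\cdots f_n}$ out of the Wronskian and multiplying the $j$-th column by $f_j^{n-1}$ rewrites $W$ as
\[
W \;=\; \frac{\sqrt{f_1\cdots f_n}}{(f_1\cdots f_n)^{n-1}} \cdot P(x), \qquad P(x) \;:=\; \det\!\bigl(A_{j,i-1}(x)\, f_j(x)^{n-i}\bigr)_{1\le i,j\le n},
\]
with $P$ a genuine polynomial. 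A row-by-row degree count gives $\deg(P) \le \sum_{i=1}^n (dn - i + 1) = dn^2 - n(n-1)/2$. Since $f_j(0) \neq 0$ forces the prefactor to have order $0$, we obtain $\ord(W) = \ord(P) \le \deg(P)$.

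Combining the two inequalities yields $\ord(S) \le dn^2 - n(n-1)/2 + n - 1 \le dn^2 + n$. The main obstacle is the degree bookkeeping in the third step: one must carefully establish the recurrence for $A_{i,k}$, verify that the column scalings precisely clear the $f_i^k$ denominators, and check that the $\sqrt{f_1\cdots f_n}$ factor contributes no additional order because of the hypothesis $f_i(0) \neq 0$.
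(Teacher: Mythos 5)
Your proof is correct and follows the same Wronskian approach used in the paper's more general \cref{thm:diffeq1} (the paper cites \cref{thm:order_SSR} from Kayal--Saha rather than reproving it): your argument is precisely the specialization to $y_i = \sqrt{f_i}$, i.e.\ $p_i = f_i'$ and $q_i = 2 f_i$, combined with the Cramer-style inequality $\ord(S) \le W_{\ord} + n - 1$ that the paper quotes as \cref{thm:upbd_wronS}. Your row-by-row degree count in fact gives the marginally sharper bound $\ord(S) \le dn^2 - \binom{n}{2} + n - 1$.
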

The main technical argument of this proof is the study of the Wronskian determinant of the family $(g_i\sqrt{f_i})_{1 \le i \le n}$, because by Cramer's rule,
one can easily bound the order of $S$ with respect to the order of this Wronskian. 

Next, they apply this result to a set of integers by representing them as polynomials and they confirm that the solution to ~\cref{pb:SSR_lower_bd} is affirmative for a significant subclass of integers known as \emph{polynomial integers}.

\begin{theorem}[$\SSR$ for \emph{polynomial integers}~\cite{kayal2012sum}] Suppose \\ $S = \sum_{i=1}^{n} \delta_i \sqrt{a_i}$ is non-zero with $\delta_i \in \{-1,1\}$, such that
  every positive integer $a_i$ is of the form $a_i = X^{d_i} + b_{1,i}X^{d_i -1} + \dots + b_{d_i,i}$ with $d_i >0$, $X$ a positive integer and $d_i, b_{j,i}$ are
  integers. Let $B = \max \left( \left\{\left| b_{j,i} \right| \right\}_{j,i}, 1 \right)$ and $d = \max_i d_i$. 
  There exist two fixed integer polynomials $p(n,d), q(n,d)$ in $n$ and $d$ such that if $X \ge (B+1)^{p(n,d)}$, then $S$ is lower bounded as $\abs{S} \ge \frac{1}{X^{q(n,d)}}$. 
\label{thm:SSR_PolynomialInteger} 
\end{theorem}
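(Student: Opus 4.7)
The plan is to reduce the integer sum to a power series question to which \cref{thm:order_SSR} applies, and then translate the order bound into a quantitative lower bound on the integer value.

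First, I would normalize the polynomial representations of the $a_i$'s so that they have constant term one, by passing to the reciprocal: define
\begin{equation*}
\tilde a_i(z) \eqdef z^{d_i} a_i(1/z) = 1 + b_{1,i} z + b_{2,i} z^2 + \cdots + b_{d_i,i} z^{d_i},
\end{equation*}
so that $\sqrt{\tilde a_i(z)}$ is a genuine power series in $\mathbb{C}[[z]]$ with constant term one. To avoid half-integer exponents that arise from the factor $X^{d_i/2}$ relating $\sqrt{a_i(X)}$ to $\sqrt{\tilde a_i(1/X)}$, I would substitute $z = R^2$ and study
\begin{equation*}
U(R) \eqdef \sum_{i=1}^n \delta_i R^{d-d_i}\sqrt{\tilde a_i(R^2)} \in \mathbb{C}[[R]],
\end{equation*}
which by design satisfies $U(1/\sqrt X) = X^{-d/2}\,S$.

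Next, I would apply \cref{thm:order_SSR} to $U$ with $c_i = \delta_i$, $g_i(R) = R^{d-d_i}$, and $f_i(R) = \tilde a_i(R^2)$. Both families are polynomials of degree at most $2d$, and $f_i(0) = 1 \neq 0$. Because $U$ is analytic near $R=0$ and $U(1/\sqrt X) = X^{-d/2} S \neq 0$ for our specific $X$, the power series $U$ is nonzero, so the theorem yields $\ell \eqdef \ord(U) \le K$ with $K = 2dn^2 + n$. Write $U(R) = c_\ell R^\ell + \sum_{j > \ell} c_j R^j$ with $c_\ell \neq 0$.

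The third and most delicate step is the quantitative control of the coefficients. Expanding $\sqrt{\tilde a_i(R^2)} = \sum_{k \ge 0}\binom{1/2}{k}(\tilde a_i(R^2)-1)^k$ and using $\abs{\binom{1/2}{k}} \le 1$ together with crude term-counting gives an upper bound of the form $\abs{c_j} \le (CB)^j$ for some absolute constant $C$. For the lower bound on $\abs{c_\ell}$, I would track denominators: since $\binom{1/2}{k}$ has denominator dividing $4^k(2k-1)$, the coefficient $c_\ell$ is a rational with denominator at most $2^{O(K\log K)}$, so being nonzero forces $\abs{c_\ell} \ge 2^{-O(K \log K)}$. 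Finally, choosing $X \ge (B+1)^{p(n,d)}$ for a suitable polynomial $p$ that dominates both thresholds ensures that the tail $\sum_{j>\ell}\abs{c_j} X^{-j/2}$ is at most half of the leading term $\abs{c_\ell} X^{-\ell/2}$, giving
\begin{equation*}
\abs{S} = X^{d/2}\,\abs{U(1/\sqrt X)} \ge \tfrac{1}{2} X^{d/2}\,\abs{c_\ell}\,X^{-\ell/2} \ge X^{-q(n,d)}
\end{equation*}
for a suitable $q$. The main obstacle will be precisely this third step: one has to track the denominators contributed by $\binom{1/2}{k}$ and the $B$-growth of numerators uniformly in $i$, and package the resulting thresholds into polynomials $p(n,d),q(n,d)$ that are independent of $B$; all preceding steps are essentially formal.
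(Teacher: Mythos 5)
Your proposal is correct and follows the same blueprint as the original Kayal--Saha proof, which this paper cites without reproducing (and whose structure the paper mirrors in its own proof of \cref{thm:sumlog}): normalize the polynomials to have constant term one, invoke the Wronskian-based order bound of \cref{thm:order_SSR}, then bound the tail from above and the leading coefficient from below by tracking the denominators of $\binom{1/2}{k}$, and finally choose $X$ large enough for the leading term to dominate. The substitution $R = 1/\sqrt{X}$ is a clean way to absorb the $X^{d_i/2}$ factors into a single power series with integer exponents; the only small imprecision is that the coefficient bound should carry $d$- and $n$-dependence (something like $\abs{c_j} \le n\,(dB)^{j+1}$ rather than $(CB)^j$, coming from the multinomial expansion and the $n$ summands), but these extra factors are polynomial and are harmlessly absorbed into $p(n,d)$.
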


Based on these observations, it raises the question of whether the concept of bounding the order of sums of power series can be generalized to other families beyond square roots. This could potentially yield significant insights into the complexity of determining the positivity of expressions involving irrational real numbers.

  \subsection{Our results}
  \label{sec:our_results}
  From now on, $\BK$ denotes a field of characteristic 0. All the logarithms in this paper are natural logarithms with base $e$, unless otherwise stated. We now define some measures to formally state our results.
    \begin{definition} Suppose $\mathcal{F} \subseteq \BK[[x]]$ is a finite dimensional linear subspace of $\BK[[x]]$, we define \begin{align*}
    \OF(\mathcal{F})\eqdef \sup \{ \ord(f) \left| f \in \mathcal{F} \setminus \{0\}  \right \} \in \mathbb{N} \cup \infty.  \end{align*} And for $\mathbf{f} =
    (\lst f n)$ a family of linearly independent power series, we  define \begin{align*} \OF(\lst f n) = \OF(\mathbf{f}) \eqdef \OF( \spn (\lst f n)).
    \end{align*} \end{definition} A set $\{\lst f n\}$  of $n$ functions over $\BK$ is said to be linearly dependent if there exist scalars $\lst c n \in \BK$ (not all zero)
    such that $\sum_{i=1}^{n}c_if_i$ is zero. To define the Wronskian of $\{\lst f n\}$, we assume that each $f_i$ is $n-1$ times differentiable.  The Wronskian of
    $\{\lst f n\}$, denoted $W(\lst f n)$ is defined as the determinant of the following matrix: \begin{align*} &W(\mathbf{f})= W(\lst f n) \eqdef \det\begin{pmatrix} f_1 & \dots &
      f_n \\ f_1^{(1)} & \dots & f_n^{(1)} \\ \vdots & \vdots & \vdots \\ f_1^{(n-1)}  & \dots & f_n^{(n-1)} \end{pmatrix}  \end{align*}
      \begin{proposition} Let $\mathcal{F} \subseteq \BK[[x]]$ be an $n$-dimensional linear subspace. For any basis $(\lst f n)$  of $\mathcal{F}$,  $\ord (W(\lst f n))$ does not depend on the choice of the basis $(\lst f n)$. We denote this quantity by
    $W_{\ord}(\mathcal{F})$.  \label{prop:ord_Wr_subspace} \end{proposition}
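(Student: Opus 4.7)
The plan is to exploit the standard fact that the Wronskian transforms multiplicatively under a change of basis. Let $(\lst f n)$ and $(\lst g n)$ be two bases of $\mathcal{F}$. There exists a unique invertible matrix $A = (a_{ij}) \in \BK^{n\times n}$ such that $g_i = \sum_{j=1}^n a_{ij} f_j$ for all $i$.

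Because differentiation is $\BK$-linear and the entries of $A$ are scalars in $\BK$ (so commute with $\partial/\partial x$), the $i$-th row of the Wronskian matrix of $(\lst g n)$ is $A$ times the corresponding row of the Wronskian matrix of $(\lst f n)$: concretely, $g_i^{(k)} = \sum_{j=1}^n a_{ij} f_j^{(k)}$ for every $0 \le k \le n-1$. Therefore, by viewing these matrix identities columnwise, the matrix of $(g_i^{(k)})_{k,i}$ is obtained from the matrix of $(f_j^{(k)})_{k,j}$ by right-multiplication by $A^\top$. Taking determinants gives
\begin{align*}
W(\lst g n) \;=\; \det(A) \cdot W(\lst f n).
\end{align*}

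Since $A$ is invertible, $\det(A) \in \BK^*$ is a nonzero constant, which contributes order $0$. Thus $\ord(W(\lst g n)) = \ord(\det(A)) + \ord(W(\lst f n)) = \ord(W(\lst f n))$, which shows the quantity depends only on $\mathcal{F}$ and not on the chosen basis.

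There is essentially no obstacle here: the only small point worth verifying is the direction of the change-of-basis calculation (row operations on the Wronskian matrix correspond to taking linear combinations of the $f_i$), which is routine. The key substantive input, namely that $W$ is multilinear and alternating in its arguments as a function of the $f_i$, follows immediately from the multilinearity and alternating property of the determinant together with the $\BK$-linearity of the derivation $\partial/\partial x$.
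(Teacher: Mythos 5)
Your proof is correct and follows essentially the same route as the paper: express the new basis in terms of the old via an invertible scalar matrix $A$, use $\BK$-linearity of differentiation to carry $A$ through all derivative rows, take determinants to get $W(\lst g n) = \det(A)\, W(\lst f n)$, and conclude since $\det(A) \in \BK^*$ has order zero. The only cosmetic difference is the indexing convention ($g_i = \sum_j a_{ij} f_j$ versus the paper's right-multiplication $[g_1 \dots g_n] = [f_1 \dots f_n]\cdot A$), which does not change the substance.
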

\begin{proof} Let $(\lst g n)$ be another basis of $\mathcal{F}$. There exists an invertible $n \times n$ matrix $A$ with entries in $\BK$ such that
        \begin{align*} \begin{bmatrix} g_1 & \dots & g_n \end{bmatrix} = \begin{bmatrix} f_1 & \dots & f_n \end{bmatrix} \cdot A.  \end{align*} By linearity of
        the differentiation, we also have for all $0 \le j < n$, \begin{align*} \begin{bmatrix} g_1^{(j)} & \dots & g_n^{(j)} \end{bmatrix} = \begin{bmatrix}
      f_1^{(j)} & \dots & f_n^{(j)} \end{bmatrix} \cdot A.  \end{align*} Thus we have $W(\lst g n) = W(\lst f n) \cdot \det(A)$. By using the fact that $\det
    (A) \in \BK^*$, the result follows.  \end{proof}
The following theorem  bounds $\OF(\mathcal{F})$ in terms of $W_{\ord}(\mathcal{F})$. This theorem is actually a result of Voorhoeve and Van der Poorten~\cite{voorhoeve1975wronskian}. The same idea is used by Kayal and Saha~\cite{kayal2012sum} to establish the bound in~\cref{thm:order_SSR}.
\begin{theorem}[Theorem~1  in~\cite{voorhoeve1975wronskian}]\label{thm:upbd_wronS} 
Let $\mathcal{F}$ be an $n$-dimensional linear subspace of $\BK[[x]]$. Then, $\OF(\mathcal{F})   \le W_{\ord}(\mathcal{F}) + n-1$.
\end{theorem}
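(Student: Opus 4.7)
The plan is to bound $\ord(f)$ uniformly in $f \in \mathcal{F} \setminus \{0\}$ and then pass to the supremum. First, since $f \ne 0$ and $\dim \mathcal{F} = n$, I extend $\{f\}$ to a basis $(f, f_2, \dots, f_n)$ of $\mathcal{F}$. By \cref{prop:ord_Wr_subspace}, $\ord(W(f, f_2, \dots, f_n)) = W_{\ord}(\mathcal{F})$, so it suffices to lower bound $\ord(W(f, f_2, \dots, f_n))$ in terms of $\ord(f)$. The virtue of this reduction is that $f$ now appears literally as the first column of the Wronskian matrix, which makes it easy to isolate its contribution.

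Next, I would expand the Wronskian via the Leibniz formula:
\begin{align*}
W(f, f_2, \dots, f_n) = \sum_{\sigma \in S_n} \mathrm{sgn}(\sigma) \, f^{(\sigma(1)-1)} \prod_{i=2}^{n} f_i^{(\sigma(i)-1)}.
\end{align*}
Every summand is therefore a product in which $f$ and each $f_i$ is differentiated at most $n-1$ times.

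The key elementary ingredient is the inequality $\ord(g^{(k)}) \ge \ord(g) - k$ for any $g \in \BK[[x]]$ and any $k \ge 0$: in characteristic zero the derivative drops the order by exactly $k$ whenever $\ord(g) \ge k$, and otherwise $g^{(k)}$ is still a power series, so has non-negative order. Applying this to the distinguished factor gives $\ord(f^{(\sigma(1)-1)}) \ge \ord(f) - (n-1)$, while the same inequality applied to each $f_i^{(\sigma(i)-1)}$ yields only non-negative orders. Multiplying, every Leibniz summand has order at least $\ord(f) - (n-1)$, and hence so does their sum. This gives $W_{\ord}(\mathcal{F}) \ge \ord(f) - (n-1)$, i.e. $\ord(f) \le W_{\ord}(\mathcal{F}) + n - 1$, and taking the supremum over $f$ proves the claim.

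I do not anticipate any genuine obstacle: once $f$ is promoted to a basis element via \cref{prop:ord_Wr_subspace}, the argument collapses to a one-line bookkeeping of derivative orders in the Leibniz expansion. The only point that deserves a moment of care is the comparison $\ord(g^{(k)}) \ge \ord(g) - k$, which needs the characteristic-zero hypothesis (the paper's blanket assumption on $\BK$) to ensure that no binomial factor can annihilate the leading term when $\ord(g) \ge k$.
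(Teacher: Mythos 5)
Your proof is correct, and in fact the paper does not prove \cref{thm:upbd_wronS} at all — it is stated as a citation to Voorhoeve and van der Poorten. Your argument is the standard one for this result (and matches the approach used in the cited reference): fix a non-zero $f \in \mathcal{F}$, extend to a basis so that \cref{prop:ord_Wr_subspace} makes the Wronskian's order a constant, then read off the lower bound on $\ord(W)$ from the Leibniz expansion of the determinant using $\ord(g^{(k)}) \ge \ord(g) - k$ for the column containing $f$ and $\ord(f_i^{(k)}) \ge 0$ for the rest. Two small points worth noting. First, your argument quietly handles the degenerate case $W = 0$ correctly: since every Leibniz summand has order at least $\ord(f) - (n-1)$, the sum does too, with the convention $\ord(0) = \infty$, so the inequality holds vacuously. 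Second, you correctly flagged the characteristic-zero hypothesis as the only place the field assumption matters — in characteristic $p$ the bound $\ord(g^{(k)}) \ge \ord(g) - k$ fails (consider $g = x^p$, $k = 1$), which would break the whole argument. So the proof is complete as written.
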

  In~\cref{sec:wronskian}, we show that \cref{thm:upbd_wronS}  is almost tight, in the sense that the order of the Wronskian of a family $(\lst h n)$ of power series is equivalent, up to a polynomial factor, to the maximum order of a non-zero linear combination of the $h_i$'s. This result is formalized in~\cref{thm:lowbd_wronS} below.

  \begin{restatable}{theorem}{OptimalWronskian}\label{thm:lowbd_wronS}
  Let $\mathcal{F}$ be an $n$-dimensional linear subspace of $\BK[[x]]$. Then, $W_{\ord}(\mathcal{F}) \le n \cdot \OF(\mathcal{F} ) - \binom{n}{2}$. 
\end{restatable}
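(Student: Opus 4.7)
The plan is to exhibit an \emph{echelon} basis of $\mathcal{F}$ in which $\ord(W)$ can be read off directly. First, I would construct a basis $(f_1, \ldots, f_n)$ of $\mathcal{F}$ with strictly increasing orders $o_1 < o_2 < \cdots < o_n$ via the filtration $V_k := \{f \in \mathcal{F} : \ord(f) \ge k\}$: the functional $V_k \to \BK$, $f \mapsto [x^k]f$, has kernel $V_{k+1}$, so each quotient $V_k/V_{k+1}$ is $0$- or $1$-dimensional; the filtration terminates at $\{0\}$, so there are exactly $n$ indices where the dimension drops. Picking one vector in each drop produces such a basis, for which $\OF(\mathcal{F}) = o_n$. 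By \cref{prop:ord_Wr_subspace}, it suffices to compute $\ord(W(f_1, \ldots, f_n))$.

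The core of the argument is a leading-coefficient analysis of the Leibniz expansion
\begin{align*}
W(f_1, \ldots, f_n) = \sum_\sigma \operatorname{sgn}(\sigma) \prod_{i=1}^n f_{\sigma(i)}^{(i-1)}.
\end{align*}
Writing $f_j = c_j x^{o_j} + (\text{higher-order terms})$ with $c_j \in \BK^*$, I note that each derivative $f_j^{(i-1)}$ has coefficient of $x^{o_j - (i-1)}$ equal to $c_j \cdot o_j^{\underline{i-1}}$, where $o_j^{\underline{k}} := o_j(o_j-1)\cdots(o_j-k+1)$ is the falling factorial, which vanishes precisely when $k > o_j$. A short bookkeeping argument then shows that each Leibniz summand has order at least $N_0 := \sum_j o_j - \binom{n}{2}$, and that
\begin{align*}
[x^{N_0}]\, W(f_1, \ldots, f_n) = \left(\prod_{j=1}^n c_j\right) \cdot \det\!\left[o_j^{\underline{i-1}}\right]_{1 \le i, j \le n}.
\end{align*}

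Since each $o_j^{\underline{i-1}}$ is a monic polynomial in $o_j$ of degree $i-1$, elementary row reduction transforms this matrix into the Vandermonde $[o_j^{i-1}]_{i,j}$, whose determinant $\prod_{j<k}(o_k - o_j)$ is non-zero because the $o_j$'s are distinct. Hence $\ord(W(f_1, \ldots, f_n)) = N_0$; bounding $o_j \le o_n = \OF(\mathcal{F})$ for every $j$ then yields $W_\ord(\mathcal{F}) = N_0 \le n \cdot \OF(\mathcal{F}) - \binom{n}{2}$, as desired.

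The step I expect to require the most care is the extraction of the coefficient $[x^{N_0}]$: permutations $\sigma$ with some $o_{\sigma(i)} < i-1$ would naively produce a negative ``target exponent'' $o_{\sigma(i)} - (i-1)$ in the corresponding factor and could threaten the clean Vandermonde structure. The falling-factorial formalism handles this automatically, because $o_{\sigma(i)}^{\underline{i-1}} = 0$ in exactly that regime, so those permutations contribute nothing to the leading coefficient of $W$ and the identification with the full $n \times n$ determinant goes through without restriction on $\sigma$.
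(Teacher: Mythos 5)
Your proof is correct and follows essentially the same route as the paper's: construct a basis of $\mathcal{F}$ whose $n$ elements have distinct orders $o_1 < \cdots < o_n$, show that $\ord(W)$ equals $\sum_j o_j - \binom{n}{2}$ exactly by identifying the leading coefficient with a falling-factorial determinant that row-reduces to the Vandermonde $\prod_{j<k}(o_k - o_j) \neq 0$, and then bound each $o_j$ by $\OF(\mathcal{F})$. The only differences are cosmetic: you build the distinct-order basis via a self-contained filtration argument where the paper invokes a lemma of Bostan--Dumas, and you extract $[x^{N_0}]$ directly from the Leibniz expansion (with the falling factorials absorbing the $o_j < i-1$ degeneracies) where the paper factors out the leading monomials and evaluates the residual matrix at $x=0$; both variants carry out the same computation.
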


For a full proof of  \cref{thm:lowbd_wronS}, refer to \cref{sec:wronskian}. We then demonstrate that the approach of Kayal and Saha in \cref{thm:order_SSR} can be extended to sums of solutions of linear differential equations of order 1 with polynomial coefficients, resulting in a comparable bound on the sum's order. This result is formally stated in \cref{thm:diffeq1} below.
  \begin{restatable}{theorem}{diffeqorderboundthm} 
  Let $S(x) = \sum_{i=1}^n c_i g_i(x) y_i(x)$, with $y_i' - \frac{p_i}{q_i}y_i = 0 $, where $c_i \in \BK$ and $g_i, p_i, q_i \in \BK[x]$ of degree at most $d$. We assume that $q_i(0) \neq 0$ and that each $y_i \in \BK[[x]]$. If $S \neq 0$, then
  \begin{align*}
    \ord(S) \le \sum_{i=1}^n \ord(y_i) + n^2 d + n-1.  
  \end{align*} 
  \label{thm:diffeq1} 
\end{restatable}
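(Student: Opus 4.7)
The plan is to mirror the Wronskian strategy of Kayal and Saha: apply \cref{thm:upbd_wronS} (Voorhoeve--Van der Poorten) to reduce bounding $\ord(S)$ to bounding $\ord(W(f_1,\ldots,f_n))$ where $f_i \eqdef g_i y_i$, and then exploit the first order ODE $y_i' = (p_i/q_i)\, y_i$ to describe all higher derivatives of $f_i$ in closed form.

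First I would assume without loss of generality that every $c_i \neq 0$ and that the family $(f_1,\ldots,f_n)$ is $\BK$-linearly independent: if it is not, I replace it by a basis $(f_{i_1},\ldots,f_{i_k})$ of its span, which is still of the form $g_{i_l} y_{i_l}$, and rewrite $S = \sum_l \tilde c_l f_{i_l}$; the bound one then gets for the $k$-term sum is stronger than the claimed bound for the original $n$-term sum because each $\ord(y_i) \ge 0$. Under this assumption, \cref{thm:upbd_wronS} applied to $\mathcal{F} = \spn(f_1,\ldots,f_n)$ yields $\ord(S) \le \ord(W(f_1,\ldots,f_n)) + n - 1$, so it suffices to show that $\ord(W(f_1,\ldots,f_n)) \le \sum_{i=1}^{n} \ord(y_i) + n^2 d$.

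The crux is a closed-form description of the $j$-th derivative of $f_i$. An induction on $j$ using $y_i' = (p_i/q_i)\, y_i$ gives
\[ f_i^{(j)} = \frac{P_{i,j}(x)}{q_i(x)^j}\, y_i(x), \qquad \deg P_{i,j} \le (j+1)d, \]
where $P_{i,0} = g_i$ and $P_{i,j+1} = P_{i,j}'\, q_i + P_{i,j}(p_i - j\, q_i')$. Factoring $y_i$ out of column $i$ of the Wronskian matrix and then multiplying each column $i$ by $q_i^{n-1}$ to clear denominators gives
\[ W(f_1,\ldots,f_n) = \frac{\prod_{i=1}^n y_i(x)}{\prod_{i=1}^n q_i(x)^{n-1}} \cdot \det\!\left( q_i^{n-1-j}\, P_{i,j} \right)_{\substack{0 \le j \le n-1 \\ 1 \le i \le n}}. \]
Every entry of the resulting polynomial matrix has degree at most $(n-1-j)d + (j+1)d = nd$, so the determinant is a polynomial of degree, and hence order, at most $n^2 d$. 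Since $q_i(0) \neq 0$, the denominator $\prod_i q_i^{n-1}$ contributes order $0$, so $\ord(W) \le \sum_i \ord(y_i) + n^2 d$, and combined with the Voorhoeve--Van der Poorten step this yields the claim.

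The main obstacle is purely bookkeeping: keeping the degree bound on $P_{i,j}$ tight (the term $P_{i,j}\, p_i$ is what drives the additive $d$ per differentiation) and then tracking the powers of $q_i$ when clearing denominators column by column. Once the recursion and the $(j+1)d$ degree bound are in place, the rest is a direct invocation of \cref{thm:upbd_wronS} together with the observation that the degree of a determinant of polynomials bounds its order.
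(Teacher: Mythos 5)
Your proof is correct and takes essentially the same route as the paper: reduce to the Wronskian of $h_i = g_i y_i$ via \cref{thm:upbd_wronS}, use the first-order ODE to write the derivatives with a common denominator $q_i^{n-1}$, and bound the order of the resulting polynomial determinant by its degree $n^2 d$. The only cosmetic difference is that the paper computes derivatives of $y_i$ alone and then applies Leibniz's formula for $h_i^{(j)}$, whereas you differentiate $h_i = g_i y_i$ directly via the recursion $P_{i,j+1} = P_{i,j}' q_i + P_{i,j}(p_i - j q_i')$; both give the same $nd$ degree bound per entry.
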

  \paragraph{Proof idea for \cref{thm:diffeq1}} 
  The bound for $\ord(S)$ follows from~\cref{thm:upbd_wronS}, once we have bounded the order of the Wronskian of the family $(g_iy_i)_{1 \le i \le n}$ by $\sum_i \ord (y_i) + n^2d$. To do this, we study the entries of the Wronskian matrix and use the differential equations to replace derivatives of the $y_i$'s. We show that every entry $(i,j)$ of the Wronskian matrix can be written $y_i \frac{m_{i,j}}{q_i^{n-1}}$, where $q_i$ is the denominator in the differential equation satisfied by $y_i$ and $m_{i,j}$ is a polynomial of degree $\le nd$. 
  The result directly follows by bounding the order of the polynomials $m_{i,j}$ by their degree.
For a full proof of  \cref{thm:diffeq1}, see \cref{sec:diffeqs}.

In \cref{sec:sum_sqroot_slp}, we investigate the following \cref{pb:SSRSLP} ($\SSRslp$) that can be regarded as a generalization of~\cref{pb:SSReq} ($\SSReq$) in the context of SLPs.
\begin{problem} [$\SSRslp$] 
\label{pb:SSRSLP} 
  Given as input $n$ straight line programs $(\lst P n)$ of size $\le s$ and $(\lst \delta n) \in \{-1,1\}^n$, such that $P_i$ computes the positive integer $a_i$, decide whether $\sum_{i=1}^{n} \delta_i \sqrt{a_i} = 0$.  
\end{problem}

Here we show that assuming GRH, under randomized polynomial time Turing reductions, $\SSRslp$ can be reduced to the following ``one-dimensional'' variant of $\SSRslp$, 
\begin{problem} [One dimensional $\SSRslp$] \label{pb:1dimSSR} Given $n$ straight line programs computing $n$ positive integers $(\lst a n)$ and $(\lst \delta n) \in \{-1,1\}^n$, with the promise that $\dim (\spn_{\mathbb{Q}}(\lsst a n)) = 1$.  Decide if $\sum_{i=1}^{n} \delta_i \sqrt{a_i} = 0$.
\end{problem}

\paragraph{Proof idea for reducing $\SSRslp$ to one dimensional $\SSRslp$}
Given an instance of \cref{pb:SSRSLP}, we separate the inputs into several one-dimensional subgroups. This is possible because we demonstrate an efficient randomized algorithm to test the linear dependency of square roots of integers given by SLPs, using the ideas of Kneser \cite{Kneser1975} and recounted out in Blömer's work in~\cite{blomer1993computing}. According to Kneser's result \cite{Kneser1975}, a set of square roots are linearly dependent over $\mathbb{Q}$ if and only if there exists a pair of linearly dependent square roots within the set. Once the instance of \cref{pb:SSRSLP} is separated into one-dimensional subgroups, it remains only to determine if each subsum is zero, which can be done using an oracle for \cref{pb:1dimSSR}. For a full proof, see~\cref{sec:sum_sqroot_slp}.

In~\cref{sec:linearcomblogs}, we show a similar upper bound to that of \cref{thm:order_SSR}, on the order of sums of logarithms of real polynomials. 
\begin{restatable}{proposition}{sumoflogsorderboundprop}
   Let $S(x) = \sum_{i=1}^n c_i  \log(f_i(x)) \neq 0$, where $c_i \in \mathbb{R}$, $f_i \in \mathbb{R}[x]$  of degree at most $d$ and $f_i(0) > 0$. Then $ \ord(S) \le nd$.  \label{prop:log} 
\end{restatable}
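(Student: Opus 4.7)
}
The plan is to reduce the problem from logarithms to rational functions by differentiating, and then clear denominators. First, note that since $f_i(0) > 0$, each $\log(f_i(x))$ is a well-defined power series in $\mathbb{R}[[x]]$ (expand $\log(f_i(x)) = \log(f_i(0)) + \log(1 + (f_i(x)/f_i(0) - 1))$ using the usual series for $\log(1+u)$). Similarly, $1/f_i(x) \in \mathbb{R}[[x]]$, so the formal derivative $\frac{d}{dx}\log(f_i(x)) = f_i'(x)/f_i(x)$ makes sense as a power series.

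If $\ord(S) = 0$, the bound $\ord(S) \le nd$ is immediate, so assume $\ord(S) \ge 1$. The key observation is that if we let $F(x) \eqdef \prod_{i=1}^n f_i(x)$, which has $F(0) > 0$ (hence $\ord(F) = 0$) and $\deg F \le nd$, then
\begin{align*}
F(x) \cdot S'(x) = \sum_{i=1}^n c_i \, f_i'(x) \prod_{j \neq i} f_j(x)
\end{align*}
is a \emph{polynomial} of degree at most $(d-1) + (n-1)d = nd - 1$.

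Now, since $S \neq 0$ and $\ord(S) \ge 1$, the power series $S$ is non-constant, so by characteristic $0$ we have $S' \neq 0$; combined with $F \neq 0$, we get $F \cdot S' \neq 0$. A nonzero polynomial of degree at most $nd - 1$ has order at most $nd - 1$. Therefore
\begin{align*}
\ord(S) - 1 = \ord(S') = \ord(F \cdot S') \le nd - 1,
\end{align*}
where the first equality uses that in characteristic $0$ one has $\ord(S') = \ord(S) - 1$ whenever $\ord(S) \ge 1$, and the second uses $\ord(F) = 0$. This yields $\ord(S) \le nd$, as desired. There is no real obstacle here; the only small point to double-check is the well-definedness of the power series manipulations, which is guaranteed by the hypothesis $f_i(0) > 0$.
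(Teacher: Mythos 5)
Your proof is correct and follows essentially the same route as the paper: differentiate $S$, write $S'$ as a rational function with denominator $\prod_i f_i$ of nonzero constant term, bound the numerator's degree by $nd-1$, and conclude via $\ord(S) \le \ord(S') + 1$. The only cosmetic difference is that you explicitly dispose of the edge cases $\ord(S)=0$ and $S'=0$ before applying the order-of-derivative identity, which the paper leaves implicit.
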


We also show an analogous result to~\cref{thm:SSR_PolynomialInteger} but for the problem of the positivity testing of linear forms of
  logarithms of integers whose complexity is connected to deep conjectures in number-theory~\cite{etessami2014note}, such as a refinement of the
  $abc$ conjecture formulated by Baker~\cite{baker1998logarithmic}. This result essentially follows from \cref{thm:sumlog} below, which is our analogue of~\cref{thm:SSR_PolynomialInteger}. For a full proof of \cref{thm:sumlog}, see~\cref{sec:linearcomblogs}.

  \begin{restatable}[Sum of logarithms of \emph{polynomial integers}]{theorem}{sumoflogsofpolyintsthm}  \label{thm:sumlog} 
  Suppose $E = \sum_{i=1}^n c_i \log{a_i}$ is non-zero, where $c_i \in \mathbb{Z}$, and every $a_i$ is a
  positive integer of the form $a_i = X^{d_i} + b_{1,i}X^{d_i - 1 } + \dots + b_{d_i,i}$, where $d_i > 0$, $X$ is a positive integer and $d_i,b_{j,i}$ are
  integers. Let $B = \max \left( \left\{\abs{b_{j,i}} \right\}_{j,i}, 1 \right), d = \max_i d_i + 1$, and $C = \max_i \left| c_i \right|$. 
    There exist two fixed integer polynomial $p_1(n,d), p_2(n,d)$ in $n$ and $d$ such that if $X > \max \left( C^2, (B+1)^{p_1(n,d)} \right)$, then $E$ is lower bounded as $\abs{E} \ge \frac{1}{X^{p_2(n,d)}}$.
\end{restatable}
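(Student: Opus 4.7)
The plan is to mimic the Kayal--Saha strategy that turned \cref{thm:order_SSR} into \cref{thm:SSR_PolynomialInteger}, now using \cref{prop:log} as the underlying order bound. For each $i$, introduce the reciprocal polynomial $\tilde f_i(x) \eqdef x^{d_i} f_i(1/x) = 1 + b_{1,i} x + \dots + b_{d_i,i} x^{d_i}$, so that $\tilde f_i(0) = 1 > 0$ and $a_i = X^{d_i} \tilde f_i(1/X)$. Set $D \eqdef \sum_i c_i d_i \in \Z$ and $S(x) \eqdef \sum_i c_i \log \tilde f_i(x) \in \mathbb{R}[[x]]$, which is well-defined since $\tilde f_i(0) = 1$. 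Then
\[
E \;=\; D \log X + S(1/X),
\]
and I will split the analysis on whether $D$ vanishes.

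If $D \neq 0$, then, being a nonzero integer, $|D \log X| \geq \log X$, while the tail $S(1/X)$ is controlled by a geometric series: using the expansion $\log(1+u_i) = \sum_{m\geq 1}(-1)^{m+1}u_i^m/m$ with $u_i = \tilde f_i - 1$, every coefficient $\alpha_j$ of $S$ is easily seen to satisfy $|\alpha_j| \leq n C (B+1)^j$. The hypotheses $X > C^2$ and $X > (B+1)^{p_1(n,d)}$ then force $|S(1/X)| \leq 1/2$, giving $|E| \geq \log X - 1/2 \geq 1$, which easily beats $X^{-p_2(n,d)}$. In the remaining case $D = 0$ we have $E = S(1/X)$; since $E \neq 0$, the formal power series $S$ is nonzero, and \cref{prop:log} yields $k \eqdef \ord(S) \leq nd$. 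Writing $S(x) = \alpha x^k + \sum_{j > k} \alpha_j x^j$ with $\alpha \neq 0$, I lower-bound $|\alpha|$ by observing that each coefficient of $\log \tilde f_i$ is a rational whose denominator divides $\mathrm{lcm}(1,\dots,k)$ and whose numerator is an integer; hence $\mathrm{lcm}(1,\dots,k) \cdot \alpha$ is a nonzero integer, so $|\alpha| \geq 1/(nd)!$.

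To finish, bound the tail $|S(1/X) - \alpha/X^k| \leq \sum_{j > k} n C (B+1)^j / X^j$ by a geometric series whose sum is at most $\tfrac{1}{2} |\alpha| / X^k$ provided $X$ dominates a suitable product of $n$, $C$, $(B+1)^{k+1}$, and $(nd)!$. The hypotheses of the theorem deliver exactly this, once $p_1(n,d)$ is chosen as a sufficiently large polynomial in $n,d$ (large enough to absorb $(nd)!$ together with the extra power of $B+1$), while $X > C^2$ handles the factor $C$ via $C < \sqrt{X}$. We then conclude $|E| \geq |\alpha| / (2 X^k) \geq X^{-p_2(n,d)}$ for $p_2(n,d) = nd + O(1)$. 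The main obstacle is precisely the book-keeping of this balancing act: the factor $C$ in the tail is what forces the hypothesis to be stated with $C^2$ rather than $C$, and the factorial denominator of $\alpha$ must be folded into the $(B+1)^{p_1(n,d)}$ side, which is exactly what dictates the shape of the polynomial threshold in the statement.
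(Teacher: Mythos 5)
Your proposal is correct and follows essentially the same route as the paper's own proof: the same decomposition $E = D\log X + S(1/X)$, the same split on whether $D$ vanishes, the same use of \cref{prop:log} to bound $\ord(S)\le nd$, and the same clearing-of-denominators trick to lower-bound the leading coefficient. The only cosmetic deviations are that you bound the coefficients of $S$ by $nC(B+1)^j$ where the paper uses the looser $nC(Bd)^{j+1}$, and that you clear denominators with $\mathrm{lcm}(1,\dots,k)$ rather than $k!$, neither of which changes the argument in any essential way.
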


  \section{Order of the Wronskian determinant} \label{sec:wronskian}

  In this section we show that for a family of power series, the maximal order that can occur in a non-zero linear combination is the order of the Wronskian determinant of the family up to a polynomial factor.	Our proof of~\cref{thm:lowbd_wronS} is inspired from the ideas developed in~\cite{bostan2010wronskians}. For $d, k \in \mathbb{N}$, we denote \begin{align*} (d)_{k} \eqdef d(d-1)\dots (d-k+1), \end{align*} with the convention $(d)_0 = 1$.
\begin{definition} [Vandermonde determinant] Let $(\lst d n) \in \BK^{n}$, we define the corresponding Vandermonde determinant as follows: \begin{align*} V(\lst
  d n) \eqdef \det 
\left((d_j^{i-1})_{1 \le i,j \le n }\right) = \prod_{1 \le i < j \le n} (d_j - d_i).  \end{align*} \end{definition} 
\begin{lemma} [Wronskian of monomials, Lemma~1 in~\cite{bostan2010wronskians}] The Wronskian of the monomials $a_1x^{d_1}, \dots , a_n x^{d_n}$ is
  \begin{align*} W(a_1x^{d_1}, \dots , a_n x^{d_n}) = V(\lst d n) \left(\prod_{i=1}^n a_i \right) x^{d_1 + \dots + d_n - \binom{n}{2}}.  \end{align*}
\label{lemma:wronmonom} \end{lemma}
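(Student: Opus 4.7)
The plan is to compute the Wronskian directly from its definition, by factoring common monomials out of the columns and rows of the Wronskian matrix and reducing the remaining determinant to a standard Vandermonde.

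First, I would write the $(i,j)$ entry of the Wronskian matrix explicitly. Since the $(i-1)$-st derivative of $a_j x^{d_j}$ is $a_j (d_j)_{i-1} x^{d_j - (i-1)}$, the matrix is $\left(a_j (d_j)_{i-1} x^{d_j - (i-1)}\right)_{1 \le i,j \le n}$, where the notation $(d_j)_{i-1}$ matches the falling factorial defined just before the lemma.

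Next, I would pull out the common factors. From column $j$ I factor out $a_j \cdot x^{d_j - (n-1)}$, turning entry $(i,j)$ into $(d_j)_{i-1} \, x^{(n-1)-(i-1)} = (d_j)_{i-1} \, x^{n-i}$. From row $i$ I then factor out $x^{n-i}$, leaving the plain entry $(d_j)_{i-1}$. The total scalar pulled out is $\prod_{j=1}^n a_j$, and the total exponent of $x$ collected is
\begin{align*}
\sum_{j=1}^n \bigl(d_j - (n-1)\bigr) + \sum_{i=1}^n (n-i) = \sum_j d_j - n(n-1) + \binom{n}{2} = \sum_j d_j - \binom{n}{2},
\end{align*}
which is exactly the exponent appearing in the claimed formula.

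It remains to identify $\det\left(((d_j)_{i-1})_{1 \le i,j \le n}\right)$ with $V(d_1,\dots,d_n)$. Here I would use the observation that $(d_j)_{i-1}$ is a monic polynomial of degree $i-1$ in the variable $d_j$. By applying elementary row operations, subtracting appropriate $\BK$-linear combinations of rows $1,\dots,i-1$ from row $i$, one can replace row $i$ by $(d_j^{i-1})_{1 \le j \le n}$ for each $i$ without changing the determinant. The resulting matrix is the standard Vandermonde matrix, whose determinant is $V(d_1,\dots,d_n)$ by definition, completing the proof.

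The only part requiring any real care is the bookkeeping of the $x$-exponent when factoring out of both rows and columns; the algebraic step reducing falling factorials to pure powers is standard. I do not anticipate any serious obstacle.
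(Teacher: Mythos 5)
Your proof is correct, and it matches the approach the paper alludes to: the paper does not reprove this lemma but cites it from Bostan and Dumas, and in the subsequent proof of Lemma~\ref{lemma:ord_wr_dist_order} it explicitly notes that the falling-factorial matrix $\bigl((d_j)_{i-1}\bigr)$ is reduced to the Vandermonde matrix by elementary row operations, exactly as you do. Your exponent bookkeeping ($\sum_j d_j - n(n-1) + \binom{n}{2} = \sum_j d_j - \binom{n}{2}$) and the observation that $(d)_{i-1}$ is monic of degree $i-1$ in $d$ are both correct and standard.
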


\begin{lemma} [Lemma~2 in~\cite{bostan2010wronskians}] Let $\lst f n$ be a family of   $\BK[[x]]$ which are linearly independent over $\BK$. There
  exists an invertible $n \times n$ matrix $A$ with entries in $\BK$ such that the power series $\lst g n$ defined by \begin{align} \begin{bmatrix} g_1 & \dots
  & g_n \end{bmatrix} = \begin{bmatrix} f_1 & \dots & f_n \end{bmatrix} \cdot A \label{eq:distinctorder} \end{align} are all non-zero and have mutually distinct
  orders.  \label{lemma:distinctorder} \end{lemma}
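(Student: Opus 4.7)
My plan is to analyze the filtration of $V = \spn_{\BK}(f_1,\dots,f_n)$ by order. Define, for each $d \in \mathbb{N}$,
\[
  V_d \eqdef \{ f \in V : \ord(f) \ge d \} = \bigcap_{k<d} \ker\bigl(f \mapsto [x^k]f\bigr)\big|_{V},
\]
so that $V = V_0 \supseteq V_1 \supseteq V_2 \supseteq \dots$ is a decreasing chain of subspaces. The first observation is that each successive quotient $V_d/V_{d+1}$ embeds into $\BK$ via $f \mapsto [x^d]f$, hence has dimension $0$ or $1$. Thus $\dim V_d - \dim V_{d+1} \in \{0,1\}$.

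Next I would argue that $V_d = \{0\}$ for all sufficiently large $d$. The dimensions $\dim V_d$ are non-increasing non-negative integers, so they stabilize at some value $m$ beyond some index $d^*$, giving $V_{d^*} = V_{d^*+1} = \dots$. Since $\bigcap_{d \ge 0} V_d$ consists of power series whose every coefficient vanishes (\ie\ only the zero series), we conclude $V_{d^*} = \{0\}$ and hence $m=0$. Combined with $\dim V_0 = n$, the dimension drops exactly $n$ times in unit increments, at some indices $d_1 < d_2 < \dots < d_n$.

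For each $i$ pick any $g_i \in V_{d_i} \setminus V_{d_i+1}$; by construction $\ord(g_i) = d_i$, so the orders are mutually distinct and all $g_i$ are non-zero. The standard argument that power series with distinct orders are linearly independent (in any non-trivial relation $\sum c_j g_j = 0$, extracting the coefficient of $x^{d_{i_0}}$ for the smallest $i_0$ with $c_{i_0}\ne 0$ yields a contradiction) shows $(g_1,\dots,g_n)$ is a linearly independent family inside the $n$-dimensional space $V$, hence a basis. Writing each $g_i$ as a $\BK$-linear combination of the $f_j$'s gives the matrix $A$ of \cref{eq:distinctorder}, which is invertible because it sends one basis of $V$ to another.

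The only slightly delicate step is the eventual triviality of the filtration; I would make this precise via the stabilization argument above, which only uses that $V$ is finite-dimensional and that $\BK[[x]]$ is separated for its $x$-adic filtration. Everything else is routine linear algebra.
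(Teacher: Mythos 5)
Your proof is correct. Note that the paper itself gives no proof of this lemma — it is quoted verbatim from Bostan and Dumas~\cite{bostan2010wronskians} — so there is nothing in the paper to compare against directly. The argument you give is the clean structural version of what is usually done: you exhibit the order filtration $V=V_0\supseteq V_1\supseteq\cdots$ of $V=\spn_{\BK}(f_1,\dots,f_n)$, observe that each successive quotient $V_d/V_{d+1}$ injects into $\BK$ via the $d$-th coefficient map, use finite dimensionality plus $\bigcap_d V_d=\{0\}$ to see that the dimension drops from $n$ to $0$ in exactly $n$ unit steps, and then pick one witness $g_i$ of order $d_i$ at each drop. This is the coordinate-free phrasing of the standard ``column-reduce the coefficient matrix to staircase form'' argument that Bostan--Dumas use: the indices $d_1<\dots<d_n$ where the filtration drops are precisely the pivot rows of that reduction, and choosing $g_i\in V_{d_i}\setminus V_{d_i+1}$ is the abstract analogue of selecting a pivot column. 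All intermediate claims check out: $\ord(g_i)=d_i$ follows from $g_i\in V_{d_i}\setminus V_{d_i+1}$, distinct orders force linear independence, and a change-of-basis matrix between two bases of $V$ is automatically invertible. The one step you flagged as delicate — eventual triviality of the filtration — is handled correctly: once the (integer, non-increasing) dimensions stabilize, equality of nested subspaces of equal dimension forces the tail of the filtration to be constant, and that constant space lies in $\bigcap_d V_d$, which contains only the zero series.
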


  \begin{lemma} [Wronskian of distinct order power series] If the non-zero power series $\lst g n \in \BK[[x]]$ have mutually distinct orders $\lst d n$, then
    their Wronskian $W(\lst g n)$ is non-zero and satisfies: \begin{align*} \ord (W(\lst g n)) = \sum_{i=1}^n d_i - \binom{n}{2}.  \end{align*}
  \label{lemma:ord_wr_dist_order} \end{lemma}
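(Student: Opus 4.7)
The plan is to reduce the computation to the already-established monomial case by writing each $g_i$ as its leading monomial plus a higher-order tail, then exploiting multilinearity of the Wronskian in its columns. Concretely, write $g_i = a_i x^{d_i} + r_i$ with $a_i \neq 0$ and $\ord(r_i) \geq d_i + 1$. Because the determinant is multilinear in the columns and each derivative $(\cdot)^{(j)}$ is linear, the Wronskian expands as
\begin{align*}
W(\lst g n) = \sum_{S \subseteq [n]} W\!\left(h_1^S, \ldots, h_n^S\right),
\end{align*}
where $h_i^S = a_i x^{d_i}$ if $i \notin S$ and $h_i^S = r_i$ if $i \in S$.

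For the $S = \emptyset$ term, \cref{lemma:wronmonom} gives
\begin{align*}
W(a_1 x^{d_1}, \ldots, a_n x^{d_n}) = V(\lst d n) \left(\prod_{i=1}^n a_i\right) x^{\lstp d n - \binom{n}{2}}.
\end{align*}
Since the $d_i$ are pairwise distinct, the Vandermonde factor $V(\lst d n) = \prod_{i<j}(d_j - d_i)$ is non-zero, and since the $a_i$ are non-zero, this term has order exactly $\sum_i d_i - \binom{n}{2}$ with non-zero leading coefficient.

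For any $S \neq \emptyset$, I would show that $W(h_1^S, \ldots, h_n^S)$ has order strictly greater than $\sum_i d_i - \binom{n}{2}$. Each entry $(h_i^S)^{(j)}$ of the Wronskian matrix is a power series of order at least $\ord(h_i^S) - j$ (with the standard convention that $\ord(0) = +\infty$). Hence every Leibniz term $\prod_{i=1}^n (h_i^S)^{(\sigma(i)-1)}$ in the expansion of the determinant has order at least
\begin{align*}
\sum_{i=1}^n \bigl(\ord(h_i^S) - (\sigma(i)-1)\bigr) = \sum_{i=1}^n \ord(h_i^S) - \binom{n}{2}.
\end{align*}
Splitting the sum over $i \in S$ and $i \notin S$, and using $\ord(r_i) \geq d_i + 1$, this lower bound is at least $\sum_i d_i - \binom{n}{2} + |S|$, which is strictly larger than $\sum_i d_i - \binom{n}{2}$ whenever $|S| \geq 1$.

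Combining the two cases, the $S = \emptyset$ contribution has order exactly $\sum_i d_i - \binom{n}{2}$ with non-zero leading coefficient, while all remaining contributions have strictly higher order and therefore cannot cancel the leading term. This simultaneously proves that $W(\lst g n) \neq 0$ and that $\ord(W(\lst g n)) = \sum_{i=1}^n d_i - \binom{n}{2}$. There is no real obstacle here: the only mildly delicate point is the uniform lower bound on the order of entries in the expanded Wronskians, which follows immediately from $\ord(f^{(j)}) \geq \ord(f) - j$.
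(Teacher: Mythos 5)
Your proof is correct, and it takes a genuinely different (though closely related) route from the paper. Both arguments start from the same decomposition $g_i = a_i x^{d_i} + r_i$ and both rely on \cref{lemma:wronmonom} together with the non-vanishing of the Vandermonde determinant $V(\lst d n)$. The paper, however, works inside the determinant: after factoring $a_j x^{d_j}$ from column $j$ and $x^{-(i-1)}$ from row $i$, it reduces the Wronskian to the form $\left(\prod_i a_i\right) x^{\sum_i d_i - \binom{n}{2}} \det(D)$, and then observes that $D(0) = ((d_j)_{i-1})_{i,j}$ is column-reducible to the Vandermonde matrix, so $\det D$ has a non-zero constant term. You instead exploit multilinearity of the determinant in its columns (equivalently, multilinearity of the Wronskian as a function of the $g_i$) to expand $W(\lst g n)$ into the $2^n$ terms $W(h_1^S,\dots,h_n^S)$, isolate the $S=\emptyset$ term as the monomial Wronskian of order exactly $\sum_i d_i - \binom{n}{2}$ with non-zero leading coefficient, and bound all the remaining terms below by order $\sum_i d_i - \binom{n}{2} + |S|$ via the elementary estimates $\ord(f^{(j)}) \geq \ord(f) - j$ and $\ord(r_i) \geq d_i + 1$. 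Your route has the advantage of being entirely order-theoretic and avoiding the factorization of each individual matrix entry (which in the paper is phrased as ``$w_{i,j}(1 + x r_{i,j})$'' and is formally awkward when $(d_j)_{i-1} = 0$); the paper's route is shorter and gives the slightly sharper structural statement that the factored matrix $D$ is a perturbation of a Vandermonde-equivalent matrix. Both are valid and lead to the same conclusion.
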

\begin{proof} If the $g_j$'s are monomials, \ie $g_j = a_j x^{d_j} $, the result is a direct consequence of \cref{lemma:wronmonom}, and in this case, the $(i,j)$ entry of the Wronskian matrix is $w_{i,j} = a_j (d_j)_{i-1}x^{d_j-i+1} $.  In the general case, let $g_j = a_j x^{d_j} + u_j$
    with $u_j \in \BK[[x]]$ of order $> d_j$. Then the $(i,j)$ entry of the Wronskian matrix now becomes $w_{i,j} \times (1 + x r_{i,j})$ for some $r_{i,j} \in \BK[[x]]$.

    So we have $W(\lst g n)= a_1 \dots a_n x^{d_1 + \dots + d_n - \binom{n}{2}}\det(D) $ where $D$ is the $n \times n$ matrix $D= ((d_j)_{i-1}(1 + x r_{i,j}))_{1 \le i,j \le n}$.
Now we evaluate $D$ at $x=0$, and we obtain $D(0) = ((d_j)_{i-1})_{1 \le i,j \le n}$. With elementary row operations (which preserve the determinant), as in the proof of~\cref{lemma:wronmonom} in~\cite{bostan2010wronskians}, we can transform $D(0)$ into the Vandermonde matrix associated to $d_1, \dots, d_n$. So, $\det(D(0)) = V(\lst d n) \neq 0$, because the $d_i$'s are distinct. Thus $\det(D)$ is non zero modulo $x$, so $\det D$ has order zero and the result follows. 
   \end{proof}   
We now formulate a tight variant (\cref{thm:lowbdtight_wronS}) of \cref{thm:lowbd_wronS}, which immediately implies \cref{thm:lowbd_wronS}. Suppose $ \mathcal{F} \subseteq \BK[[x]]$ is a finite dimensional subspace of $\BK[[x]]$. ~\cref{lemma:distinctorder} shows that if $\dim(\mathcal{F})=n$ then there exist $n$ power series $\lst g n$ in $\mathcal{F}$ which have distinct orders $\lst d n$. Moreover $\lst g n$ also form a basis of $ \mathcal{F}$. We now claim that these are the only possible orders of any power series in $\mathcal{F}$. Assume $\lstl d n$. If $f=\sum_{i=1}^{n}\lambda_i g_i$ (with $\lambda_i \in \BK$) is a non-zero power series in $\mathcal{F}$, then it is clear that $\ord(f)=d_j$ where $j$ is the minimum index such that $\lambda_j \neq 0$. With this claim, we formulate the following tight variant of \cref{thm:lowbd_wronS}.

  \begin{theorem}\label{thm:lowbdtight_wronS}
    Let $\mathcal{F}$ be an $n$-dimensional linear subspace of $\BK[[x]]$. Suppose  $\lst d n$ are the distinct orders of power series which occur in  $\mathcal{F}$. Then, $W_{\ord}(\mathcal{F}) = \sum_{i=1}^{n}d_i - \binom{n}{2}$. 
\end{theorem}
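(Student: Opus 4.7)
My plan is to assemble the theorem directly from the three ingredients already established: \cref{prop:ord_Wr_subspace} (basis independence of $W_{\ord}$), \cref{lemma:distinctorder} (existence of a basis with mutually distinct orders), and \cref{lemma:ord_wr_dist_order} (formula for the Wronskian's order in the distinct-order case).

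First I would invoke \cref{lemma:distinctorder} applied to any basis of $\mathcal{F}$: this yields a new basis $g_1, \dots, g_n$ of $\mathcal{F}$ whose orders $e_1, \dots, e_n$ are pairwise distinct. Then I would argue that the multiset $\{e_1, \dots, e_n\}$ coincides with $\{d_1, \dots, d_n\}$. For this, I use the observation stated in the paragraph preceding the theorem: after reordering so that $e_1 < \dots < e_n$, any non-zero $f = \sum_{i=1}^{n} \lambda_i g_i \in \mathcal{F}$ has order equal to $e_j$, where $j$ is the smallest index with $\lambda_j \neq 0$. Hence every order realized in $\mathcal{F} \setminus \{0\}$ is one of the $e_i$, and conversely each $e_i = \ord(g_i)$ is realized. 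Since by hypothesis $\{d_1, \dots, d_n\}$ is precisely the set of orders realized in $\mathcal{F}$, we conclude $\{e_1, \dots, e_n\} = \{d_1, \dots, d_n\}$, and in particular $\sum_i e_i = \sum_i d_i$.

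Next, I would apply \cref{lemma:ord_wr_dist_order} to the basis $g_1, \dots, g_n$, which have mutually distinct orders $e_1, \dots, e_n$. The lemma gives that $W(g_1, \dots, g_n)$ is non-zero and
\begin{align*}
\ord(W(g_1, \dots, g_n)) = \sum_{i=1}^{n} e_i - \binom{n}{2} = \sum_{i=1}^{n} d_i - \binom{n}{2}.
\end{align*}
Finally, by \cref{prop:ord_Wr_subspace}, the quantity $\ord(W(g_1, \dots, g_n))$ depends only on $\mathcal{F}$ and equals $W_{\ord}(\mathcal{F})$. Combining these two equalities proves the theorem.

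I do not expect any serious obstacle: all the hard work is already packaged in the three preceding statements, and the only subtlety is the identification of the distinct orders of the special basis with the set of realized orders $\{d_1, \dots, d_n\}$, which is immediate from the minimum-support argument. The deduction of \cref{thm:lowbd_wronS} is then a one-line consequence, since each $d_i \le \OF(\mathcal{F})$, giving $W_{\ord}(\mathcal{F}) = \sum_i d_i - \binom{n}{2} \le n \cdot \OF(\mathcal{F}) - \binom{n}{2}$.
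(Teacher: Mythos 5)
Your proposal is correct and is essentially the paper's argument: invoke \cref{lemma:distinctorder} to get a basis of distinct orders, identify those orders with the set $\{d_1,\dots,d_n\}$ via the minimum-support observation (which the paper states in the paragraph just before the theorem), and then apply \cref{lemma:ord_wr_dist_order} together with \cref{prop:ord_Wr_subspace}. You merely spell out the identification step a bit more explicitly than the paper's terse two-line proof, which is a minor improvement in rigor rather than a different route.
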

    \begin{proof} 
By \cref{lemma:distinctorder}, there exists a basis $(\lst g n)$ of $\mathcal{F}$ with distinct order $\lst d n$. By using \cref{lemma:ord_wr_dist_order}, $ 	W_{\ord}(\mathcal{F}) = \sum_{i=1}^{n}d_i  - \binom{n}{2}$.\end{proof} 

	\OptimalWronskian*
  \begin{proof} The claim immediately follows from~\cref{thm:lowbdtight_wronS}. \end{proof}
This completes the proof of \cref{thm:lowbd_wronS}. By combining \cref{thm:lowbd_wronS} and \cref{thm:upbd_wronS}, we conclude that the order of the Wronskian determines the maximum order of linear combinations of power series, up to a polynomial factor in $n$. 

Now we show an example where the bound claimed in \cref{thm:upbd_wronS} is  tight. Suppose $\mathcal{G}$ is the linear subspace of  $\BK[[x]]$ generated by the monomials $1,x,x^2,\dots,x^{n-1}$. It is easy to see that Wronskian of $1,x,x^2,\dots,x^{n-1}$ is $\prod_{i=1}^{n-1}i!$. Hence  $W_{\ord}(\mathcal{G})=0$. We also have that $ \OF(\mathcal{G})=n-1$. \section{Sums of solutions of linear differential equations}\label{sec:diffeqs}	

As a generalization of \cref{thm:order_SSR} and as an application of~\cref{thm:upbd_wronS}, we now prove a polynomial bound on the order of a sum of power series
that are solutions of linear differential equations of order 1 with polynomial coefficients.

	\diffeqorderboundthm*
  \begin{proof}
Let $h_i = g_iy_i$. Without loss of generality, we can assume that $\mathbf{h}=(\lst h n)$ are linearly independent. If they are not, we can
    rewrite $S$ as a linear combination of a subfamily of the $h_i$'s that are linearly independent. 

    We shall bound the order of the Wronskian $W(\lst h n)$ and apply \cref{thm:upbd_wronS}.  We have: \begin{align*} y_i' &= \frac{p_i}{q_i}y_i, \\ y_i'' &=
      \frac{p_i'q_i - p_iq_i'}{q_i^2}y_i + \frac{p_i}{q_i}y_i' = \frac{p_i'q_i - p_iq_i'+ p_iq_i}{q_i^2}y_i.  \end{align*} So by induction, we deduce that for $0
      \le j < n$, \begin{align*} y_i^{(j)} = \frac{P_{i,j}}{q_i^j}y_i = \frac{q_i^{n-1-j}P_{i,j}}{q_i^{n-1}}y_i , \end{align*} where $q_i^{n-1 -j}P_{i,j}$ is a
      polynomial of degree at most $(n-1)d$.  Then, by the Leibniz's formula, $h_i^{(j)} = \sum_{k=0}^j \binom{j}{k}g_i^{(j-k)}y_i^{(j)}$, and the Wronskian has
      the form \begin{align*} W(\lst h n)    = \prod_{i=1}^n \frac{y_i}{q_i^{n-1}}\det M, \end{align*} with $M$ being a matrix whose entries are polynomials of
      degree at most $nd$. In particular, $\ord(\det M) \le \deg(\det M) \le n^2d$.      As $q_i(0) \neq 0$ for all $i$, we have: \begin{equation*} \ord (W(\mathbf{h})) = \sum_{i=1}^n \ord(y_i) + \ord(\det M) \le  \sum_{i=1}^n \ord(y_i) +
      n^2d.  \end{equation*} Finally, \cref{thm:upbd_wronS} implies the claimed bound.  \end{proof}

      As a direct corollary of \cref{thm:diffeq1}, we obtain upper bounds for the order of a sum of power series in $\mathbb{C}[[x]]$ in several different contexts. 

\begin{corollary} \label{coro:exp_proot_etc} Let $S(x) = \sum_{i=1}^n c_i g_i(x) y_i(x)$ be a non-zero sum, where $c_i \in \mathbb{C}$, and $g_i \in \mathbb{C}[x]$ are of degree at most $d$. Let $f_i \in \mathbb{C}[x]$ of degree at most $d$. 
\begin{enumerate}[(i)]
  \item \label{it:exp}If $y_i = \exp(f_i)$, then $\ord S \le n^2(d-1) + n-1$. 
        \item \label{it:sincos} If $y_i = \varphi(f_i)$ with $\varphi \in \{\cosh,\sinh, \cos , \sin\}$, then $\ord S \le 4n^2(d-1) + 2n-1$.  
        \item \label{it:rationalrealpower} If $y_i = \parc{\frac{p_i}{q_i}}^{\alpha_i}$, with $p_i, q_i \in \mathbb{C}[x]$ of degree at most $d$, $\alpha_i \in \mathbb{R}$ and $p_i(0), q_i(0) \neq 0$, then $\ord S
  \le 2 n^2 d + n -1$.  
\end{enumerate}
\end{corollary}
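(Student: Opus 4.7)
The plan is to handle each of the three cases by exhibiting, for every summand $y_i$, an explicit first-order linear differential equation of the form $y_i' = (p_i/q_i) y_i$ with $q_i(0) \neq 0$, then invoke \cref{thm:diffeq1}. In each case the only work is to compute the relevant degrees of $p_i, q_i$ and the orders $\ord(y_i)$, and to possibly rewrite $y_i$ as a short linear combination of simpler functions so that \cref{thm:diffeq1} applies directly.

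For part \cref{it:exp}, I would differentiate $y_i = \exp(f_i)$ to get $y_i' = f_i' y_i$, so this is a linear ODE with $p_i = f_i'$ of degree at most $d-1$ and $q_i = 1$. Since $\exp(f_i)(0) = e^{f_i(0)} \neq 0$ we have $\ord(y_i) = 0$. Applying \cref{thm:diffeq1} with the degree parameter $d-1$ gives $\ord(S) \le n^2(d-1) + n - 1$.

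For part \cref{it:sincos}, I would reduce to part \cref{it:exp} via the identities $\cos(u) = \tfrac{1}{2}(e^{iu} + e^{-iu})$, $\sin(u) = \tfrac{1}{2i}(e^{iu} - e^{-iu})$, and their hyperbolic analogues. Each term $c_i g_i(x)\,\varphi(f_i(x))$ thus splits into two terms of the shape appearing in \cref{it:exp}, with $f_i$ replaced by $\pm i f_i$ (or $\pm f_i$), still of degree at most $d$. This yields a new sum of at most $2n$ exponential summands of the same kind, to which part \cref{it:exp} applies with $n$ replaced by $2n$, giving $\ord(S) \le (2n)^2(d-1) + 2n - 1 = 4n^2(d-1) + 2n - 1$.

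For part \cref{it:rationalrealpower}, since $p_i(0), q_i(0) \neq 0$, the function $y_i = (p_i/q_i)^{\alpha_i}$ is well-defined as a power series in $\BK[[x]]$ (e.g.\ via $\exp(\alpha_i \log(p_i/q_i))$), and $\ord(y_i) = 0$. Logarithmic differentiation gives
\begin{align*}
y_i' \;=\; \alpha_i \frac{p_i' q_i - p_i q_i'}{p_i q_i}\, y_i,
\end{align*}
which is a first-order linear ODE with numerator of degree at most $2d - 1$ and denominator $p_i q_i$ of degree at most $2d$, and the denominator is nonzero at $0$. Applying \cref{thm:diffeq1} with degree bound $2d$ (and orders $\ord(y_i) = 0$) yields $\ord(S) \le n^2 (2d) + n - 1 = 2n^2 d + n - 1$. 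There is no real obstacle here; the only point to be slightly careful about is confirming that each $y_i$ genuinely lies in $\BK[[x]]$ and that its order is $0$, which both follow from $p_i(0), q_i(0) \neq 0$.
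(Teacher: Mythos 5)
Your proposal is correct and follows essentially the same route as the paper: for each case you exhibit the first-order linear ODE $y_i' = (p_i/q_i)y_i$ with $q_i(0) \neq 0$, note $\ord(y_i) = 0$, and invoke \cref{thm:diffeq1}, handling \cref{it:sincos} by splitting into $2n$ exponential terms exactly as the paper does.
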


\begin{proof} For \cref{it:exp}, $y_i = \exp(f_i)$, $y_i$ admits a power series expansion and $\ord y_i = 0$ because $y_i(0) = \exp f_i(0) \neq 0$. Moreover, $y_i' - f_i'y_i = 0$,
    with $\deg f_i' \le d-1$. By \cref{thm:diffeq1}, $\ord S \le n^2(d-1) + n-1$.

    For \cref{it:sincos}, one can write $\cosh f= \frac{e^f + e^{-f}}{2}, \sinh f= \frac{e^f - e^{-f}}{2}$, $\cos f  = \frac{e^{\iu f} + e^{-\iu f}}{2},\sin f = \frac{e^{\iu f} - e^{-\iu f}}{2\iu}$. We can then apply~\cref{it:exp} with $2n$ terms in the sum and obtain the claimed bound.

    For \cref{it:rationalrealpower}, because both $p_i(0)$ and $q_i(0)$ are non-zero for all $i$, $y_i$ admits a power series expansion in $0$ and $\ord y_i = 0$. Moreover, it
satisfies $y_i' - \alpha_i \frac{p_i'q_i - p_i q_i'}{p_i q_i} y_i = 0$. Again, the bound is obtained by using ~\cref{thm:diffeq1}.  \end{proof}

A similar bound for $y_i = \exp f_i$ was already established in~\cite{voorhoeve1975wronskian} (see Example~1). The proof also relies on Wronskians. 

\section{ Sum of square roots of integers given by straight-line programs }\label{sec:sum_sqroot_slp}

As we mentioned in introduction, testing if an expression involving square roots is zero is an interesting problem and Bl{\"o}mer~\cite{blomer1993computing} developed
a polynomial time algorithm solving $\SSReq$. In $\SSReq$, the integers in input are given in binary and in this case, the problem is \emph{easy}. It is a natural extension of $\SSReq$ to ask what happens for the complexity of this problem in an algebraic model of computation, that is if the integers in input are given by straight line programs. More precisely, we want to study \cref{pb:SSRSLP}.

We would like to know if zero testing for this class of expressions is as easy as testing zero for straight line programs computing integers, namely $\equslp$
defined in~\cite{comp_num_analysis}. In the problem $\equslp$, given an input SLP $P$, we want to test if the integer $n_P$ computed by $P$ is zero. In~\cite{comp_num_analysis}, it is proven that $\equslp$ reduces to Polynomial Identity Testing ($\pit$), so it admits a
randomized polynomial time algorithm. In the hope of designing a randomized polynomial time algorithm for $\SSRslp$, we present in this section how one can
reuse some of the ideas of Bl{\"o}mer for $\SSReq$ in the context of \cref{pb:SSRSLP}. We show that the problem $\SSRslp$ can be \emph{reduced} under the
General Riemann Hypothesis (GRH) to a one dimensional case where all the square roots involved are on the same line over $\mathbb{Q}$. Actually, this special
one dimensional case captures all the hardness of the whole problem.  One can also find related results in~\cite{blomer1998probabilistic,balaji2022identity}.
They give randomized algorithms to decide if expressions involving radicals of depth 1 is equal to 0. In~\cite{balaji2022identity}, some algorithms are also
valid under GRH and use comparable arguments about density of certain prime numbers that we also develop in this section. However, the expressions involved in these
results are different; the expressions are not only sums of square roots but general arithmetic circuits involving square-roots of integers given in binary.

	\subsection{Reduction to the one-dimensional case}

			We now explain formally why it is enough to focus on the \emph{one dimensional} (\cref{pb:1dimSSR}) version of~\cref{pb:SSRSLP} in order to design an efficient
			algorithm for $\SSRslp$. We actually prove~\cref{thm:Turing_reduction}.

			\begin{theorem}
				\label{thm:Turing_reduction}
				Under GRH, there exists a randomized polynomial time Turing reduction from~\cref{pb:SSRSLP} to~\cref{pb:1dimSSR}.
			\end{theorem}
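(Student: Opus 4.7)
The plan is to follow the proof sketch hinted at by the authors: reduce the problem to pairwise linear independence tests among the $\sqrt{a_i}$'s, partition the inputs into equivalence classes on which the span is one-dimensional, and use an oracle for~\cref{pb:1dimSSR} on each class. The structural reason this works is Kneser's theorem, which says that a family $\sqrt{a_1},\dots,\sqrt{a_n}$ is $\mathbb{Q}$-linearly dependent if and only if some pair $\sqrt{a_i},\sqrt{a_j}$ is dependent, equivalently $a_i a_j$ is a perfect square in $\mathbb{Z}$. Hence the binary relation $i\sim j \Leftrightarrow a_i a_j$ is a square is an equivalence relation on $[n]$, and if $I_1,\dots,I_k$ are its classes then $\sum_{i=1}^n \delta_i\sqrt{a_i}=0$ if and only if $\sum_{i\in I_\ell}\delta_i\sqrt{a_i}=0$ for every $\ell$, since distinct classes contribute $\mathbb{Q}$-linearly independent terms. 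Each of these $k$ subsums is precisely an instance of~\cref{pb:1dimSSR}, giving the reduction once the classes have been computed.

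The computational heart of the reduction is therefore the following subroutine: given two SLPs $P_i,P_j$ computing positive integers $a_i,a_j$, decide (with high probability) whether $N\eqdef a_i a_j$ is a perfect square. Since an SLP for $N$ is obtained in one multiplication, it suffices to have a randomized polynomial time algorithm for the language
\begin{align*}
\{P \mid P \text{ is an SLP computing a perfect square}\}.
\end{align*}
My approach is the standard Legendre-symbol test: sample a random prime $p$ of polynomially many bits, compute $N\bmod p$ by evaluating the SLP modulo $p$ (which takes time polynomial in $s$ and $\log p$), and reject if $\left(\tfrac{N}{p}\right)=-1$. If $N$ is a non-zero square, the symbol is $+1$ whenever $p\nmid N$. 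If $N$ is not a square, we need to argue that a random prime $p$ of polynomial bit length witnesses non-squareness with constant probability. This is exactly where GRH enters: the effective Chebotarev density theorem under GRH yields that for any non-square non-zero integer $N$, the set of primes $p\le L$ with $\left(\tfrac{N}{p}\right)=-1$ has density $\ge 1/2 - o(1)$, with an error bound that makes $L$ polynomial in the bit length of $N$ (and hence polynomial in $s$). To handle the case $N=0$, we first run the $\equslp$ algorithm (a randomized polynomial time test via reduction to $\pit$, as recalled in the paper) to check that neither $a_i$ nor $a_j$ is zero; the $\SSRslp$ input promises positivity so this is benign. Amplifying by repeating with independent primes drives the error below any inverse polynomial.

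Putting the pieces together, the overall reduction is: run the perfect-square subroutine on each pair $(i,j)$ with error driven down to, say, $1/(4n^2)$; use the resulting relation to compute equivalence classes $I_1,\dots,I_k$ via union-find; for each $\ell$, call the oracle for~\cref{pb:1dimSSR} on the restricted instance $\{P_i\}_{i\in I_\ell}$ with signs $\{\delta_i\}_{i\in I_\ell}$; output ``$=0$'' iff all $k$ oracle calls return ``$=0$''. By a union bound over the $\binom{n}{2}$ pair tests, with probability $\ge 1-1/2$ the computed partition coincides with the Kneser partition, and by the correctness of the oracle the final answer is correct. The running time is clearly polynomial: a polynomial number of SLP-mod-$p$ evaluations for the pairwise tests, plus $k\le n$ calls to the one-dimensional oracle whose inputs are substrings of the original list.

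The main obstacle is the correctness of the square-testing subroutine: we must ensure that random small primes have a constant chance of producing Legendre symbol $-1$ for every non-square SLP integer, \emph{uniformly} in the (possibly doubly exponential) size of $N$. This is what requires GRH; without it one only knows that such primes exist but not that they lie in a polynomially bounded range. Everything else -- Kneser's structural theorem, partitioning via an equivalence relation, and the elementary reduction of the global zero test to the per-class zero tests -- is combinatorial and should go through routinely.
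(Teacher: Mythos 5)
Your proposal matches the paper's proof essentially step for step: both invoke Kneser's theorem to reduce linear dependence to pairwise dependence, translate each pairwise test into a perfect-square test on the product $a_i a_j$, decide that test by reducing the SLP-computed integer modulo a random polynomially-bounded prime and invoking the GRH-effective Chebotarev density theorem, and then partition the indices into one-dimensional classes and call the oracle for \cref{pb:1dimSSR} on each class. The only cosmetic differences are that the paper builds the partition incrementally in \cref{alg:general_scheme} (testing each new $a_i$ against one representative per existing class) rather than running all $\binom{n}{2}$ tests up front, and your extra guard against $N=0$ is unnecessary since \cref{pb:SSRSLP} already promises positivity.
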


			The starting point is the following result due to Kneser \cite{Kneser1975} recalled by Bl{\"o}mer (\cite{blomer1993computing}, corollary~2.6).   

			\begin{lemma} Let $\lst a n$ be $n$ positive integers. Reals $\lsst a n$ are linearly dependent over $\mathbb{Q}$ if and only if there exist $ 1 \le i < j \le n
				$ such that $(\sqrt{a_i}, \sqrt{a_j})$ are linearly dependent over $\mathbb{Q}$ or equivalently \begin{align} \frac{\sqrt{a_i}}{\sqrt{a_j}} \in \mathbb{Q}.
					\label{eq:lindep} \end{align} \label{lemma:pairwise_dependence} \end{lemma}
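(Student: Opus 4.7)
The $(\Leftarrow)$ direction is immediate: any nontrivial relation $\alpha\sqrt{a_i} + \beta\sqrt{a_j} = 0$ (with $(\alpha,\beta)\neq(0,0)$) extends by zero coefficients to a nontrivial $\mathbb{Q}$-linear relation among all of $\sqrt{a_1},\dots,\sqrt{a_n}$. So I would focus on $(\Rightarrow)$, which I would prove by contrapositive: assuming $\sqrt{a_i}/\sqrt{a_j}\notin\mathbb{Q}$ for every $i\neq j$, I want to conclude that $\sqrt{a_1},\dots,\sqrt{a_n}$ are $\mathbb{Q}$-linearly independent.

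The first step is to reduce to squarefree parts. Write each $a_i=b_i^2 d_i$ with $b_i\in\mathbb{Z}_{>0}$ and $d_i$ a squarefree positive integer. Then $\sqrt{a_i}/\sqrt{a_j}=(b_i/b_j)\sqrt{d_i/d_j}$ lies in $\mathbb{Q}$ iff $d_i/d_j$ is a rational square, and for squarefree positive integers this forces $d_i=d_j$. So the hypothesis translates into: the $d_i$'s are pairwise distinct. Since $\sqrt{a_i}=b_i\sqrt{d_i}$ with $b_i\neq 0$, it suffices to prove that $\sqrt{d_1},\dots,\sqrt{d_n}$ are $\mathbb{Q}$-linearly independent.

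The key step is to invoke Kummer theory for the multiquadratic extension $K := \mathbb{Q}(\sqrt{d_1},\dots,\sqrt{d_n})$. Let $W$ be the $\mathbb{F}_2$-subspace of $\mathbb{Q}^*/(\mathbb{Q}^*)^2$ generated by the classes $\bar d_1,\dots,\bar d_n$. Kummer theory gives $[K:\mathbb{Q}]=|W|$. Choosing an $\mathbb{F}_2$-basis $\bar e_1,\dots,\bar e_r$ of $W$, the family $\bigl\{\prod_{i\in S}\sqrt{e_i} : S\subseteq\{1,\dots,r\}\bigr\}$ is a $\mathbb{Q}$-basis of $K$ of cardinality $2^r=|W|$. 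Each $\sqrt{d_j}$ is a nonzero rational multiple of the basis element indexed by the unique $S_j\subseteq\{1,\dots,r\}$ with $\bar d_j=\sum_{i\in S_j}\bar e_i$. Because the $d_j$'s are distinct squarefree positive integers, their classes $\bar d_j$ in $\mathbb{Q}^*/(\mathbb{Q}^*)^2$ are pairwise distinct, so the $S_j$'s are pairwise distinct, and the $\sqrt{d_j}$'s are thus nonzero rational multiples of distinct elements of a $\mathbb{Q}$-basis of $K$. Hence they are $\mathbb{Q}$-linearly independent, and we are done.

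The only substantive input is the Kummer-theoretic degree identity $[K:\mathbb{Q}]=|W|$, equivalently the existence, for every $\mathbb{F}_2$-independent family $\bar e_1,\dots,\bar e_r\in\mathbb{Q}^*/(\mathbb{Q}^*)^2$, of a Galois automorphism of $K/\mathbb{Q}$ that negates any prescribed subset of the $\sqrt{e_i}$'s while fixing the rest. This is the main obstacle but is classical for multiquadratic extensions; the statement of the lemma itself is Corollary~2.6 in~\cite{blomer1993computing} and originates with Kneser~\cite{Kneser1975}, who treats the more general setting of composita of cyclic Kummer extensions.
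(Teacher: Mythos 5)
The paper does not actually prove \cref{lemma:pairwise_dependence}; it simply cites Kneser~\cite{Kneser1975} and Corollary~2.6 of Bl\"omer~\cite{blomer1993computing} and moves on. So there is no in-paper proof to compare against. Your argument is correct and is essentially the standard multiquadratic-extension argument that those references use. The reduction to squarefree parts is sound (the hypothesis $\sqrt{a_i}/\sqrt{a_j}\notin\mathbb{Q}$ for all $i\neq j$ does translate exactly to the squarefree cores $d_i$ being pairwise distinct), and the deduction from the degree identity $[K:\mathbb{Q}]=|W|=2^r$ is correct: the $2^r$ products $\prod_{i\in S}\sqrt{e_i}$ clearly span $K=\mathbb{Q}(\sqrt{e_1},\dots,\sqrt{e_r})$ over $\mathbb{Q}$, so equality of cardinalities makes them a basis, and distinct squarefree positive $d_j$'s map to distinct classes in $\mathbb{Q}^*/(\mathbb{Q}^*)^2$, hence to nonzero rational multiples of distinct basis vectors, giving independence. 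One small point worth keeping in mind for the write-up: you implicitly use that all the representatives $e_i$ can be chosen positive (so that the real square roots $\sqrt{e_i}$ make sense and $\sqrt{d_j}$ equals a \emph{positive} rational times $\prod_{i\in S_j}\sqrt{e_i}$); this is fine because $W$ is generated by classes of positive rationals, but it deserves a sentence. You are honest in flagging the degree identity as the one nontrivial black box; that is indeed where the real content of Kneser's theorem lives, and for a proof of a lemma that the paper itself delegates to the literature this is an acceptable place to stop.
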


			\begin{corollary}
				\label{cor:subsum}
				Let $a_1, \dots, a_n$ be positive integers and $(\delta_1, \dots, \delta_n) \in \{-1,1\}^n$. Let $(\sqrt{a_{i_1}}, \dots, \sqrt{a_{i_\ell}})$ be a basis of $\spn_{\mathbb{Q}}(\sqrt{a_1},\dots, \sqrt{a_i})$. Then for all $1 \le i \le n$, there exists a unique $1 \le j \le \ell$ such that $\sqrt{a_i} \in \mathbb{Q} \cdot \sqrt{a_{i_j}}$. And \begin{align} \sum_{i=1}^n \delta_i \sqrt{a_i} = 0 \iff \forall 1 \le j \le \ell, \sum_{i:~\sqrt{a_i} \in \mathbb{Q} \sqrt{a_{i_j}}} \delta_i \sqrt{a_i} = 0. \end{align}
			\end{corollary}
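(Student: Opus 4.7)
The plan is to derive both claims directly from Kneser's Lemma (\cref{lemma:pairwise_dependence}) together with the fact that $(\sqrt{a_{i_1}},\ldots,\sqrt{a_{i_\ell}})$ is a $\mathbb{Q}$-basis of $\spn_{\mathbb{Q}}(\sqrt{a_1},\ldots,\sqrt{a_n})$. The statement is really a bookkeeping consequence of Kneser's lemma, so the main effort is to unpack the definitions cleanly.

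For the first assertion (existence and uniqueness of $j$), I fix $i \in \{1,\ldots,n\}$ and expand $\sqrt{a_i} = \sum_{j=1}^{\ell} c_j \sqrt{a_{i_j}}$ with rational coefficients $c_j$, which is possible since $\sqrt{a_i}$ lies in the span. Setting $J = \{j : c_j \neq 0\}$, the family $\{\sqrt{a_i}\} \cup \{\sqrt{a_{i_j}} : j \in J\}$ is $\mathbb{Q}$-linearly dependent, so \cref{lemma:pairwise_dependence} furnishes a pair inside this family that is already $\mathbb{Q}$-linearly dependent. This pair cannot consist of two distinct basis vectors $\sqrt{a_{i_j}}, \sqrt{a_{i_k}}$ with $j \neq k$, since those are independent by hypothesis, so the pair must involve $\sqrt{a_i}$ and some $\sqrt{a_{i_j}}$; hence $\sqrt{a_i} \in \mathbb{Q}\cdot\sqrt{a_{i_j}}$. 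Uniqueness is immediate: if $\sqrt{a_i}$ were proportional to both $\sqrt{a_{i_j}}$ and $\sqrt{a_{i_k}}$ with $j \neq k$, then $\sqrt{a_{i_j}}$ and $\sqrt{a_{i_k}}$ would be $\mathbb{Q}$-proportional, contradicting linear independence.

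For the equivalence, the first part yields a partition of $\{1,\ldots,n\}$ into classes $C_j \eqdef \{\, i : \sqrt{a_i} \in \mathbb{Q}\cdot\sqrt{a_{i_j}}\,\}$. Summing inside each class gives
$$\sum_{i \in C_j} \delta_i \sqrt{a_i} \;=\; \lambda_j \sqrt{a_{i_j}}$$
for some $\lambda_j \in \mathbb{Q}$, and therefore
$$\sum_{i=1}^{n} \delta_i \sqrt{a_i} \;=\; \sum_{j=1}^{\ell} \lambda_j \sqrt{a_{i_j}}.$$
Since $(\sqrt{a_{i_j}})_{1 \le j \le \ell}$ is a $\mathbb{Q}$-basis, this total sum vanishes iff every $\lambda_j = 0$, which, since $\sqrt{a_{i_j}} \neq 0$, is the same as each inner subsum being zero. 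The $(\Leftarrow)$ direction of the claimed equivalence is trivial, and the $(\Rightarrow)$ direction is precisely this observation.

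There is no real mathematical obstacle here: the whole argument is an easy corollary of Kneser's lemma. The only mildly delicate point to articulate is why the dependent pair extracted by Kneser's lemma must include $\sqrt{a_i}$ rather than two of the basis elements, and that is immediately handled by the basis assumption.
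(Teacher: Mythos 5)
Your proof is correct and follows essentially the same route as the paper's: apply Kneser's lemma (\cref{lemma:pairwise_dependence}) to a dependent family containing $\sqrt{a_i}$ and basis elements to get existence, use basis-independence for uniqueness, then decompose the sum over the resulting partition and invoke linear independence of the basis to split the zero condition. The only cosmetic difference is that you restrict the dependent family to the support $J$ of the expansion rather than using the full basis plus $\sqrt{a_i}$, which changes nothing substantive.
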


			\begin{proof}
				Without loss of generality, we can assume $i_j = j$.
				Assuming existence, uniqueness is clear because if $\sqrt{a_i} \in \mathbb{Q} \cdot \sqrt{a_{j}}$ and $\sqrt{a_i} \in  \mathbb{Q} \cdot \sqrt{a_{j'}}$, then $\sqrt{a_{j}}$ and $\sqrt{a_{j'}}$ are not linearly independent. 
				Now, for an
				element of the basis existence is clear and for $\sqrt{a_i}$ not in the basis, $(\sqrt{a_{1}},\dots, \sqrt{a_{\ell}}, \sqrt{a_i} )$ is linearly dependent. By~\cref{lemma:pairwise_dependence}, two elements are linearly dependent, and it can not be between to elements of the basis. So $\sqrt{a_i}$ is involved in this pair.

				 We partition $\{1,\dots,n\} = I_1 \cup \dots \cup I_\ell$, 
				 with $I_j = \{i |~ \sqrt{a_i} = q_{i,j} \sqrt{a_{j}}, q_{i,j} \in \mathbb{Q}  \}$.
				Now, $\sum_{i=1}^n \delta_i \sqrt{a_i} = \sum_{j=1}^\ell  \left(\sum_{i \in I_j} \delta_i q_{i,j} \right) \sqrt{a_{j}}$. As $(\sqrt{a_{j}})_{1 \le j \le \ell}$ form a basis, the previous sum is zero if and only if for all $1 \le j \le \ell, \sum_{i \in I_j} \delta_i q_{i,j} = 0$. By multiplying by $\sqrt{a_{j}}$, we obtain
				\begin{align*} \sum_{i=1}^n \delta_i \sqrt{a_i} = 0 \iff \forall 1 \le j \le \ell, \sum_{i \in I_j} \delta_i \sqrt{a_i} = 0. \end{align*}
			\end{proof}
			The reduction for~\cref{thm:Turing_reduction} then works in two steps and can be found in~\cref{alg:general_scheme}. Given an instance of~\cref{pb:SSRSLP}, 
  we first build the partition of the set of square roots in $\ell$ one-dimensional subsets, and then use the oracle for~\cref{pb:1dimSSR} to test if each associated subsum is zero. To build the partition, \cref{alg:general_scheme} works as follows: for each integer $a_i$, either we have already seen an integer $a_j$ before such that $\sqrt{a_i}/\sqrt{a_j} \in \mathbb{Q}$ and we add $a_i$ to the same one dimensional sum as $a_j$ or we construct a new one for $a_i$.  
			To complete the proof of~\cref{thm:Turing_reduction}, it only remains to show that one can perform the test of~\cref{eq:lindep} efficiently under GRH.

			\begin{algorithm} \caption{Algorithm for \cref{pb:SSRSLP}}	\label{alg:general_scheme} \Input{$\lst a n$ integers given by SLPs, signs $\lst
					\delta n \in \{-1,1\}$.} \Output{Decide if $S = \sum_{i=1}^{n}\delta_i \sqrt{a_i} = 0$.}
				 SubSums $\leftarrow [~]$ \; \For{$1 \le i \le n$}{ $k \leftarrow
					0~; ~\mathrm{FoundSubSum} \leftarrow \mathrm{false}$ \; \While{$ k < \abs{\mathrm{SubSums}}$ and not $\mathrm{FoundSubSum}$}{ Pick an element $b$
						in $\mathrm{SubSums}[k]$ \; \eIf{$\sqrt{a_i} / \sqrt{b} \in \mathbb{Q}$}{ Add $a_i$ to $\mathrm{SubSums}[k]$ ; ~ $\mathrm{FoundSubSum} \leftarrow \mathrm{true}$ \; }{
							$k$++\; } } \textbf{if} not $\mathrm{FoundSubSum}$ \textbf{then} Add $\{a_i\}$ to SubSums \; } For each element of SubSums, test if the corresponding one
				dimensional sum is zero using oracle for~\cref{pb:1dimSSR}\; \end{algorithm}

				\begin{lemma} Let $a,b$ be two positive integers.
				$ \frac{\sqrt{a}}{\sqrt{b}} \in \mathbb{Q}$ iff $\sqrt{ab} \in \mathbb{N}$.
				\label{lemma:sqrt_prod} \end{lemma}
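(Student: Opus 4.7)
The plan is to exploit the elementary fact that for any positive integer $m$, the number $\sqrt{m}$ is either a natural number or irrational (equivalently, $\sqrt{m} \in \mathbb{Q}$ iff $m$ is a perfect square iff $\sqrt{m} \in \mathbb{N}$). This reduces both directions to a single algebraic manipulation each.

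For the ``only if'' direction, suppose $\sqrt{a}/\sqrt{b} \in \mathbb{Q}$. I would multiply numerator and denominator by $\sqrt{b}$ to rewrite this quantity as $\sqrt{ab}/b$. Since $b \in \mathbb{N}$, it follows that $\sqrt{ab}$ is also rational. Applying the elementary fact above to $m = ab$, we conclude $\sqrt{ab} \in \mathbb{N}$.

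For the ``if'' direction, suppose $\sqrt{ab} = n \in \mathbb{N}$. The same algebraic identity $\sqrt{a}/\sqrt{b} = \sqrt{ab}/b = n/b$ immediately gives $\sqrt{a}/\sqrt{b} \in \mathbb{Q}$.

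There is essentially no obstacle here; the only ``non-trivial'' ingredient is the classical irrationality dichotomy for square roots of positive integers, which I would simply quote. The whole proof should fit in a few lines.
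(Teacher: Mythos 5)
Your proof is correct, but it takes a genuinely different route from the paper's. The paper argues directly with $p$-adic valuations: it observes that $\sqrt{a}/\sqrt{b}=\sqrt{a/b}$ is rational iff every $v_p(a/b)=v_p(a)-v_p(b)$ is even, and since $v_p(a)-v_p(b)$ and $v_p(a)+v_p(b)=v_p(ab)$ always have the same parity, this is equivalent to every $v_p(ab)$ being even, i.e.\ $\sqrt{ab}\in\mathbb{N}$. You instead use the algebraic identity $\sqrt{a}/\sqrt{b}=\sqrt{ab}/b$ to shuttle rationality between the two quantities, and then invoke as a black box the classical dichotomy that $\sqrt{m}$ is either a natural number or irrational for $m\in\mathbb{N}$. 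Your version is more modular and avoids explicit prime-factorization bookkeeping at the cost of quoting the dichotomy (which is itself usually proved by $p$-adic valuations or descent); the paper's version is essentially self-contained and makes the parity trick that links $a/b$ and $ab$ explicit. Both are short and valid; there is no gap in your argument.
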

			\begin{proof}
				$\sqrt{a}/\sqrt{b} \in \mathbb{Q}$ is equivalent to the statement that for any prime $p$, the $p$-adic valuation $v_p(a/b)$ of $a/b$ is even, $v_p(a/b) = v_p(a) - v_p(b)$, or equivalently $v_p(a) + v_p(b)  = v_p(ab) $ is even, \ie $\sqrt{ab} \in \mathbb{N}$ . 
			\end{proof}

			\cref{lemma:sqrt_prod} suggests that for testing~\cref{eq:lindep}, we just have to check if $a_ia_j$ is a perfect square.  Note that if both $a_i$ and $a_j$ can be computed by SLPs of size $s$, then $a_ia_j$ admits an SLP of size $2s+1$. In the next section we design a randomized polynomial time algorithm to perform this task under GRH.

			\subsection{Testing if an SLP computes a perfect square}

			We now demonstrate an algorithm for the following~\cref{pb:perfect_square}.

			\begin{problem}
				\label{pb:perfect_square}
				Given an SLP of size $t$ computing a positive integer $a$, decide if $a$ is a perfect square.
			\end{problem}

			Let $a$ be a positive integer computed by an SLP of size $t$. By induction on $t \ge 0$, we have the bound $a \le 2^{2 ^t}$.

			Our algorithm is the following: sample at random a prime $ p \le  2^{q(t)}$, with  $q$ a polynomial to be determined later, compute $a \bmod p$ and test if $a \bmod p$  is a square in $\mathbb{F}_p$. All these computations can be done in polynomial time in $t$~\cite{antbachshallit1996}. 
			If $a$ is actually a perfect square, then for all such primes $p$, $a \bmod p$ is a square in $\mathbb{F}_p$ and the answer of the algorithm is correct.
			 Whereas if $a$ is not a perfect square, Chebotarev's density theorem guarantees that the set of primes $p$ for which the polynomial $ X^2 -a$ splits in $\mathbb{F}_p$, \ie, the set of primes $p$
			where $a$ is a square in $\mathbb{F}_p$ has density $\frac{1}{2}$~\cite{stevenhagen1996chebotarev}. To ensure that we can use small primes and keep a non negligible probability to have a correct answer, we need an
			effective version. This effective version of Chebotarev's density theorem requires GRH and can be found in~\cite{serre1981quelques} (Theorem~4).    

			Our application of the effective Chebotarev's density theorem leads us to the following key lemma. The statement directly follows from~\cite{serre1981quelques} (Theorem~4) for the splitting field of $X^2 - a$. This only makes sense when $\mathrm{disc}(X^2 -a) = 4a$ does not vanish in $\mathbb{F}_p$. This is the reason why we only consider primes $p$ that do not divide $4a$.
			\begin{lemma} Let $a$ be a positive integer that is not a square. For $x \ge 2$, we define:  \begin{align*} d_a(x) \eqdef \frac{ \left| \{ p \text{ prime } \le x , a
						\text{ is not a square mod $p$ and } p \nmid 4a \} \right| }{\left| \{ p \text{ prime } \le x \} \right|} .  \end{align*} There exists a constant $C$ such
				that, under GRH, for all $x \ge 2$, \begin{align*} \left| d_a(x) - \frac{1}{2} \right| \le C \frac{\log_2 x}{\sqrt{x}}\left(  \log_2 (4a) + 2 \log_2 x \right).
				\end{align*} \label{lemma:density_ineq} \end{lemma}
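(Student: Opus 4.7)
The plan is to apply the effective Chebotarev density theorem to the quadratic extension $L = \mathbb{Q}(\sqrt{a})$. Since $a$ is not a perfect square, $L/\mathbb{Q}$ is a Galois extension of degree $2$ with Galois group $G \cong \mathbb{Z}/2\mathbb{Z}$, and its discriminant $d_L$ divides $4a$, so in particular $\log |d_L| \le \log(4a)$. For every prime $p \nmid 4a$, $p$ is unramified in $L$, and the Frobenius $\mathrm{Frob}_p \in G$ is the identity precisely when $X^2 - a$ splits in $\mathbb{F}_p$, i.e.\ exactly when $a$ is a nonzero square modulo $p$. Thus the numerator of $d_a(x)$ counts unramified primes $p \le x$ whose Frobenius lies in the nontrivial conjugacy class $C = \{\sigma\}$.

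Next, I would invoke Serre's Theorem~4 from~\cite{serre1981quelques}, which under GRH yields, for $\pi_C(x)$ the number of unramified primes $p \le x$ with Frobenius conjugacy class $C$,
\begin{equation*}
\left| \pi_C(x) - \tfrac{|C|}{|G|}\pi(x) \right| \le C_0 \tfrac{|C|}{|G|} \sqrt{x}\bigl(\log |d_L| + [L:\mathbb{Q}] \log x\bigr),
\end{equation*}
for an absolute constant $C_0$. Specializing to our situation ($|C|/|G| = 1/2$, $[L:\mathbb{Q}] = 2$, $\log|d_L| \le \log(4a)$), this gives
\begin{equation*}
\left| \pi_C(x) - \tfrac{1}{2}\pi(x) \right| \le \tfrac{C_0}{2} \sqrt{x}\bigl(\log(4a) + 2\log x\bigr).
\end{equation*}

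Finally, dividing by $\pi(x)$ converts this into the relative density statement. Using the elementary bound $\pi(x) \ge c\, x / \log x$ valid for $x \ge 2$, the ratio of the error to $\pi(x)$ becomes $O\bigl(\tfrac{\log x}{\sqrt{x}}(\log(4a) + \log x)\bigr)$, and a change of base from natural to base-$2$ logarithms is absorbed into the constant $C$. The finitely many primes that either ramify or divide $4a$ form a set of size $O(\log a)$, which is dominated by the error term already present, so their exclusion does not affect the bound.

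The main obstacle I foresee is purely bookkeeping: aligning Serre's constants and the $|d_L|$ bound with the exact shape of the inequality claimed, and being careful about the transition between the absolute discrepancy $|\pi_C(x) - \pi(x)/2|$ and the relative discrepancy $|d_a(x) - 1/2|$ via the Chebyshev bound on $\pi(x)$. The mathematical content is entirely contained in Serre's effective Chebotarev theorem; our task is just its specialization to the quadratic character associated with $a$.
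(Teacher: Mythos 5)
Your proposal is correct and is essentially the paper's own proof: the paper simply states that the lemma ``directly follows'' from Theorem~4 of Serre~\cite{serre1981quelques} applied to the splitting field of $X^2 - a$, which is exactly the quadratic field $\mathbb{Q}(\sqrt{a})$ you take as $L$. The paper provides no more detail; your working out of the discriminant bound $|d_L| \le 4a$, the identification of the non-split primes with the nontrivial Frobenius class, the division by $\pi(x)$ via the Chebyshev lower bound, and the change-of-base adjustment fills in the routine bookkeeping the paper elides. (One small imprecision: Serre's Theorem~4 is stated with $\mathrm{Li}(x)$ rather than $\pi(x)$ as the main term, but under GRH the two differ by $O(\sqrt{x}\log x)$, which is absorbed into the error, so your version is harmless.)
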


			Now we can state the main lemma. 
			\begin{lemma} Let $a \le 2^{2^t}$ be a positive integer that is not a perfect square. 
				Then there exists an integer polynomial $q$ such that for $x = 2^{q(t)}$, we have $d_a(x) \ge \frac{1}{4}$.  \label{lemma:density_prime} \end{lemma}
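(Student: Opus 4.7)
The plan is to apply \cref{lemma:density_ineq} directly and choose $q$ so that the error term is at most $1/4$, which forces $d_a(x) \ge 1/4$. Writing $L \eqdef \log_2 x$ and using the hypothesis $a \le 2^{2^t}$, we have $\log_2(4a) \le 2 + 2^t$, so the GRH bound from \cref{lemma:density_ineq} reads
\begin{align*}
\left| d_a(x) - \tfrac{1}{2} \right| \;\le\; C\,\frac{L}{\sqrt{x}}\bigl(2 + 2^t + 2L\bigr).
\end{align*}

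Next I would substitute $x = 2^{q(t)}$, giving $L = q(t)$ and $\sqrt{x} = 2^{q(t)/2}$, so the right-hand side becomes $C\, q(t)\, (2 + 2^t + 2q(t))\, 2^{-q(t)/2}$. The goal is to pick a polynomial $q(t)$ such that this is $\le 1/4$ for all $t$ (and in particular, valid for $t \ge t_0$ handling small $t$ by a constant bump of $q$). Since the dominant factor inside the parenthesis is $2^t$, it suffices to make $2^{q(t)/2}$ dominate $q(t)\cdot 2^t$ by a sufficient margin; taking $q(t) = 4t + c$ for a large enough constant $c$ (absorbing $C$ and the polynomial factor $q(t)$) does the job, since then
\begin{align*}
C\, q(t)\, (2 + 2^t + 2q(t))\, 2^{-q(t)/2} \;\le\; \frac{C\,(4t+c)\,(2^t + O(t))}{2^{2t + c/2}} \;\longrightarrow\; 0.
\end{align*}
Thus $q(t) = 4t + c$ for a suitable constant $c$ (making the inequality hold uniformly in $t \ge 1$) yields $|d_a(x) - 1/2| \le 1/4$, hence $d_a(x) \ge 1/4$.

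There is no real obstacle here; the only care needed is in choosing the constant $c$ and verifying the small-$t$ regime, since \cref{lemma:density_ineq} only holds for $x \ge 2$. Taking $c$ large enough (so that $2^{q(t)/2}$ already dominates $C\,q(t)(2+2^t+2q(t))\cdot 4$ at $t=1$, after which the ratio only improves) closes the argument.
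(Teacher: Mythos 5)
Your proof is correct and follows essentially the same route as the paper: plug $x = 2^{q(t)}$ into \cref{lemma:density_ineq}, use $\log_2(4a) \le 2^t + 2$, and choose $q$ large enough to push the error term below $1/4$. The only (harmless) difference is that you settle on a concrete linear polynomial $q(t) = 4t + c$, which is actually tighter than the paper's stated $q(t) = O(t^2)$; both choices satisfy the required inequality, since the condition only needs $2^{q(t)/2}$ to eventually outgrow $q(t) \cdot 2^t$. Your monotonicity remark ("the ratio only improves") does hold once $c$ is moderately large (say $c \ge 4$, one checks $g(t+1)/g(t) \le 3/4$), so the uniform-in-$t$ bound is justified.
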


			\begin{proof} From \cref{lemma:density_ineq}, the following inequality holds: \begin{align*} d_a(x) \ge \frac{1}{2} - C \frac{\log_2 x}{\sqrt{x}} (\log_2(4a)+
					2\log_2 x).  \end{align*} To conclude, we need to ensure that \begin{align*} &C \frac{\log_2 x}{\sqrt{x}} (\log_2(4a)+ 2\log_2 x) \le \frac{1}{4},
					\text{\ie,} \\ &\log_2(4a) \le \frac{1}{4}\frac{\sqrt{x}}{C \log_2 x} - 2\log_2 x.  \end{align*} With $a \le 2^{2^t}$, and $x = 2^{q(t)}$, it is sufficient to have
				\begin{align}
					2^t +2 \le \frac{1}{4}\frac{2^{\frac{q(t)}{2}}}{C q(t)} - 2 q(t).
					\label{eq:condition_q}
				\end{align}
				and one can see that $q(t) = O(t^2)$ satisfies~\cref{eq:condition_q}.
\end{proof}

			If one chooses $q(t)$ as in~\cref{lemma:density_prime}, our randomized algorithm runs in polynomial time and has a one sided error for solving~\cref{pb:perfect_square}. This completes the proof of~\cref{thm:Turing_reduction}.

\subsection{About the One Dimensional Variant}

  In the previous section, we showed that it is enough to focus on~\cref{pb:1dimSSR}. Suppose $\spn_{\mathbb{Q}}(\lsst a n) =  \mathbb{Q}  \cdot \sqrt{a_1}$.
  Then, for all $i$, $\sqrt{a_ia_1} \in \mathbb{N}$. And $S =\sum_{i=1}^{n} \delta_i \sqrt{a_i} = 0 \iff S \sqrt{a_1} =\sum_{i=1}^{n} \delta_i \sqrt{a_i a_1} =
  0$, with $ S \sqrt{a_1} \in \mathbb{N}$. One approach  to design an algorithm to test $S \sqrt{a_1} = 0$, is reducing it modulo a randomly selected prime number $p$. 
  The problem with this approach, is that once we have reduced $a_ia_1$ modulo $p$, one can compute its two square roots in $\mathbb{F}_p$
  but there is no way to decide which one of the two is the correct representation of $\sqrt{a_1a_i}$ in $\mathbb{F}_p$. Determining the correct reduction of
  $S\sqrt{a_1}$ modulo $p$ seems like  a non trivial task. And we most likely need a different approach in order to tackle~\cref{pb:1dimSSR}.

  \section{An application to sums of logarithms}\label{sec:linearcomblogs}

  In order to motivate the interest of proving bounds for the order of sums of certain power series, we show that the approach of Kayal and
  Saha~\cite{kayal2012sum} that led to a non trivial statement for the Sum of Square Roots problem can also be used to establish non-trivial statements for
  other fundamental number theoretic problems. In particular in this section, we focus on the sums of logarithms that is a well-studied
  problem~\cite{baker1998logarithmic,etessami2014note} and is related to deep number theory conjectures. 

  \begin{problem} Given two lists $(\lst a n)$, $(\lst c n)$  of integers with $a_i > 0$, decide if \begin{align*} \sum_{i=1}^n c_i \log{a_i} > 0.
  \end{align*} \label{pb:log_form} \end{problem}

  Similarly as for the Sum of Square roots problem (\cref{pb:SSR}), the complexity in the bit-model of~\cref{pb:log_form} is an open question. A refinement of
  the $abc$-conjecture formulated by Baker~\cite{BakerWustholz1993} would imply~\cref{pb:log_form} to be in $\P$. Interesting details about the link between the
  complexity of~\cref{pb:log_form} and open questions in number-theory can be found in~\cite{etessami2014note}. All these conjectures essentially state the
  existence of a \emph{gap}, namely that a non-zero sum of logarithms can not be \emph{too close to zero}. These \emph{gaps} or lower bounds are very similar
  to~\cref{pb:SSR_lower_bd} but in the case of logarithms. Our goal is to reuse the analogy between polynomials and integers in order to deduce a lower bound
  for a non trivial class of instances of~\cref{pb:log_form} via a lower bound on the order of a non-zero sum of logarithms of real polynomials.

\sumoflogsorderboundprop*
\begin{proof}
    First, if $f_i(0) > 0$, the power series $\log f_i(x)$ is well defined in a neighborhood of 0. So, $S$ admits a valid power series expansion.
    Now, we have \begin{align*} S'(x) = \sum_{i=1}^n c_i \frac{f_i'(x)}{f_i(x)} = \frac{\sum_{i=1}^n c_i F_i(x)}{\prod_{i=1}^n f_i(x)}, \end{align*} with
    $F_i(x) = f_i'(x) \prod_{j \neq i} f_j(x)$, is a polynomial of degree $\le (n-1)d + d-1 = nd -1 $. As $f_i(0) \neq 0$, we have:
    \begin{align*} \ord(S') = \ord\parc{\sum_{i=1}^n c_i F_i(x)} \le \deg \parc{\sum_{i=1}^n c_i F_i(x)} \le nd - 1.  
    \end{align*} Finally,
    \begin{align*} 
      \ord(S) \le \ord(S') +1 \le nd.  
    \end{align*} 
  \end{proof}

     And now, we can instantiate~\cref{prop:log} over integers and obtain the following gap theorem for a restricted but non trivial class of the instances
    of~\cref{pb:log_form}.

\sumoflogsofpolyintsthm*
\begin{proof}
          We choose any integer polynomials $p_1, p_2$ satisfying the following conditions and show that they satisfy the claimed bounds:
          \begin{align}
              p_1(n,d) &\ge 20 dn \log(dn), \label{eq:bd_p1}\\
              p_2(n,d) &\ge 1 + dn(\log(dn) + 1). \label{eq:bd_p2}
          \end{align}
          One can write $E$ as follows: \begin{align*} E &=  \underbrace{ \sum_{i=1}^n c_i d_i \log X }_{= A} + \underbrace{ \sum_{i=1}^n c_i \log
          \left(  1  + \frac{b_{1,i}}{X} + \dots + \frac{b_{d_i,i}}{X^{d_i}} \right)  }_{= S}.  \end{align*}
          \textbf{If $\mathbf{A \neq 0}$}, as $\sum_{i=1}^n c_i d_i$ is a non-zero integer, we have \begin{align*} \abs{A} \ge \log X.
    \end{align*} So an upper bound for $S$ of the form $\abs{S} \le \frac{1}{2} \log X$ is enough to prove that $\abs E \ge \frac{1}{2}\log X$. And we have the bound \begin{align*} \abs{S} &\le n C \max_{1 \le i \le n } \log \left(  1  + \frac{b_{1,i}}{X} + \dots +
              \frac{b_{d_i,i}}{X^{d_i}} \right).  \end{align*}
Now, as $\forall x\ge 0 ,\log (1+x) \le x$, and $\frac{b_{1,i}}{X} + \dots + \frac{b_{d_i,i}}{X^{d_i}} \le \frac{B}{X -1}$, we have
                  \begin{align*} \abs{S} &\le n C \frac{B}{X - 1 }.  \end{align*}
But, as $C \le X^{\frac{1}{2}}$ 
                  \begin{align*} \left| S \right|  \le nC \frac{B}{X - 1 } &\le 
                   \frac{nB X^{\frac{1}{2}}}{X - 1} \le 
                  \frac{2nB}{X^{\frac{1}{2}}}\le
                  \frac{2nB}{(B+1)^{\frac{1}{2}p_1(n,d) }},  \end{align*}
                  because $X -1 \ge \frac{X}{2}$ since $X \ge 2$.
Then, to have $\left| S \right| \le \frac{1}{2}\log X$, it is sufficient that: \begin{align} p_1(n,d)\log(B+1)
                  (B+1)^{\frac{1}{2}p_1(n,d) } - 4nB \ge 0. 
                  \label{eq:caseAneq0} \end{align}
One can easily check that~\cref{eq:caseAneq0} holds using~\cref{eq:bd_p1} and thus, $\left|E\right| \ge \frac{1}{2}\log X \ge \frac{1}{X^{p_2(n,d)}} $.

\textbf{If $\mathbf{A = 0}$}, we show that $\left| E \right| = \left| S \right|   \ge  \frac{1}{X^{p_2(n,d)}} $. Let
                  $y = \frac{1}{X}$, we have \begin{equation*} S = \sum_{i=1}^n c_i \underbrace{\log \left( 1+ b_{1,i} y + \dots + b_{d_i,i} y^{d_i} \right)
                    }_{= h_i(y)} = \sum_{j \ge 1 } \underbrace{y^j \sum_{i=1}^n c_i h_{i,j}}_{S_j} \end{equation*} \begin{equation*} \text{with }h_i(y) =
                    \sum_{j=1}^\infty h_{i,j}y^j = \sum_{k=1}^\infty  \frac{(-1)^{k+1}}{k} \left( b_{1,i} y + \dots + b_{d_i,i} y^{d_i} \right)^k.
                  \end{equation*}
By applying \cref{prop:log}, the minimum exponent $\ell$ such that $S_\ell \neq 0$ is such that $\ell \le dn$. The idea of the proof is to show that
                  for every $t \ge 1$, \begin{align} \frac{\abs{S_{\ell+t}}}{\abs{S_\ell}} \le \frac{1}{2^{t+1}}, \label{eq:goal} \end{align}
because in this case, we would have \begin{align} \abs{S} \ge \abs{\abs{S_\ell}  - \abs{S_{\ell+1}} - \dots}
                  \ge \abs{\abs{S_\ell}- \frac{1}{2} \abs{S_\ell}} = \frac{1}{2} \abs{S_\ell}, \label{eq:lowerbound} \end{align} and
                  one can obtain a lower bound  for $\abs{S}$ via a lower bound for $\abs{S_\ell}$. But to satisfy \cref{eq:goal}, one also needs an upper bound on $S_{\ell+t}$, for every $t$.

                  These two lower and upper bounds are both obtained as in the proof of~\cref{thm:SSR_PolynomialInteger} in~\cite{kayal2012sum}. The technical
                  details can be found in~\cref{app:log_proof}. In the end, we show that conditions on $p_1$ and $p_2$ (\cref{eq:bd_p1,eq:bd_p2}) are sufficient in order to satisfy~\cref{eq:goal,eq:lowerbound} and then deduce that $\left|E \right| = \left| S\right| \ge \frac{1}{X^{p_2(n,d)}}$.

                  \end{proof}

                  \subsection{Bounds for the Proof of~\cref{thm:sumlog}} \label{app:log_proof} We recall that we have

                  \begin{equation*} S = \sum_{i=1}^n c_i \underbrace{\log \left( 1+ b_{1,i} y + \dots + b_{d_i,i} y^{d_i} \right) }_{= h_i(y)} = \sum_{j \ge 1 }
                    \underbrace{y^j \sum_{i=1}^n c_i h_{i,j}}_{S_j} \end{equation*} \begin{equation*} \text{with }h_i(y) = \sum_{j=1}^\infty h_{i,j}y^j =
                  \sum_{k=1}^\infty  \frac{(-1)^{k+1}}{k} \left( b_{1,i} y + \dots + b_{d_i,i} y^{d_i} \right)^k.  \end{equation*}

      The goal is to show that $p_1(n,d)$ and $p_2(n,d)$ satisfying~\cref{eq:bd_p1,eq:bd_p2} give both an upper bound and a lower bound on $S_j$ that allows us to satisfy~\cref{eq:goal}.
      \newline \textbf{An upper bound on $S_j$:} \newline For any $j$, $h_{i,j}$ is contributed by the term of order $j$ in the sum $\sum \frac{(-1)^{k+1}}{k} \left( b_{1,i} y +
    \dots + b_{d_i,i} y^{d_i} \right)^k $ for $k$ in range $\left[1, j\right]$. Then \begin{align*} h_{i,j} &= \sum_{k=1}^j \frac{(-1)^{k+1}}{k} v_{k,i,j},
      \text{ with } \\ v_{k,i,j} &= \sum_{\substack{k_1 + 2 k_2 + \dots + d_i k_{d_i} = j \\ \lstp{k}{d_i} = k }} \binom{k}{\lst{k}{d_i}}
    b_{1,i}^{k_1} \dots b_{d_i,i}^{k_{d_i}}.  \end{align*}
where $\lst k {d_i}$ are non-negative integers. Notice that for $k < \frac{j}{d_i}, v_{k,i,j}= 0$ as the sum is empty. Then \begin{align*} \abs{v_{k,i,j}} &\le \sum_{\lstp{k}{d_i} = k } \binom{k}{\lst{k}{d_i} } \left| b_{1,i} \right|^{k_1} \dots \left| b_{d_i,i} \right|^{k_{d_i}} \\ & \le (B d_i)^k \le (B d )^k \tag{Multinomial theorem}.  \end{align*}
Since $\left| \frac{(-1)^{k+1}}{k} \right| \le 1$, \begin{align*} \left| h_{i,j}\right| \le \sum_{k = 0 }^j (Bd)^k \le (Bd)^{j+1}.  \end{align*}
Hence, let $S = \sum_{j \ge 1} S_j$, with $S_j = y^j \sum_{i=1}^n c_i h_{i,j}$ satisfies for all $j \ge 1$ \begin{align*} \left| S_j \right| \le y^j n
            C (Bd)^{j+1}.  \end{align*} \newline \textbf{A lower bound on $S_\ell$:} \newline 
            Now we prove a lower bound for $\left| S_\ell \right| = y^\ell \left| \sum_{i=1}^n
          c_i h_{i,\ell}\right| \neq 0$. We have \begin{align*} \sum_{i=1}^n c_i h_{i,\ell} &= \sum_{i=1}^n c_i \sum_{k = 1}^{\ell} \frac{(-1)^{k+1}}{k} v_{k,i,\ell} = \sum_{k = 1}^{\ell}
           \frac{(-1)^{k+1}}{k}  \underbrace{\left( \sum_{i=1}^n c_i v_{k,i,\ell}  \right)}_{\in \mathbb{Z}}.  \end{align*}
So $\ell! \left( \sum_{i=1}^n c_i h_{i,\ell} \right)$ is an integer. Then, if $\sum_{i=1}^n c_i h_{i,\ell} \neq 0$, 
            \begin{align*} \ell! \left| \sum_{i=1}^n c_i h_{i,\ell} \right| \ge
            1. \end{align*} And finally, we obtain \begin{align*} \left| S_\ell \right| \ge \frac{y^\ell}{\ell!}. \end{align*}
		With the obtained upper and lower bound, we can now deduce a requirement that ensures \cref{eq:goal} to be satisfied. Actually, it is sufficient
            that for every $t \ge 1$, \begin{align} ny^t C (Bd)^{\ell+t+1} \ell! \le \frac{1}{2^{t+1}}, \ie \\ X^t \ge 2^{t+1}n C (Bd)^{\ell+t+1} \ell!.
            \label{eq:bound_log_X} \end{align} 
            Recall that $C \le X^{1/2} $ and $\ell \le dn$, so it suffices that 
            \begin{align*}
                X^{t - \frac{1}{2}} \ge 2^{t+1}n(Bd)^{nd + t +1}(nd)!.   
            \end{align*}
            By applying $\log$, it is sufficient that for all $t \ge 1$,
            \begin{align*}
                (t- \frac{1}{2})p_1(n,d)\log(B+1) \ge &(t+1)\log(2) + \log(n)   \\& +(nd + t+1)\log(Bd)  + nd \log(nd). 
            \end{align*}
            Therefore, one can show that $p_1(n,d) \ge 20 dn \log(dn)  $ is sufficient (this bound is
            not optimized at all). And in this case, by \cref{eq:lowerbound}, we have

            \begin{align*} \left| S \right| \ge \frac{1}{2}  \left| S_\ell \right| \ge \frac{1}{2 \ell! X^\ell} \ge \frac{1}{2^{\ell\log \ell +1} X^\ell} \ge
            \frac{1}{X^{p_2(n,d)}}, \end{align*} with $p_2(n,d) \ge dn (\log (dn)+ 1 ) +1 $.

            \section{Conclusion and Open Questions}
In ~\cref{thm:diffeq1}, we greatly generalized the ideas developed in \cite{kayal2012sum}. This leads to the fact that the results in~\cite{kayal2012sum} can be generalized to various other families of power series, as evident in~\cref{coro:exp_proot_etc}. The Wronskian approach of~\cite{kayal2012sum} was also proven to be tight, up to a polynomial blowup. Our work also made significant progress in the straight line variant of the classical SSR equality problem. We also extended the results of~\cite{kayal2012sum} for sign testing of the sum of square roots of polynomial integers to a linear combination of logarithms of polynomial integers, solving a special case of Problem 2 proposed in~\cite{etessami2014note}. There are several directions for future research:
            \begin{enumerate}
              \item \cref{it:sincos} in \cref{coro:exp_proot_etc} applies to some special cases of solutions of 2nd order differential equations, like $\cos f$ where $f$ is a polynomial. It would be intriguing to see if this applies to generic solutions of higher order equations. 
              \item Can we generalize \cref{thm:order_SSR} by bounding the order of sums of arbitrary algebraic power series? Square roots of polynomials are examples of algebraic functions. However, finding a satisfying way to bound the order of the Wronskian determinant of a family of algebraic power series is challenging. The best upper bound we could find is about $r^{\mathrm{poly}(n)}d$ where $(r,d)$ is a bound on the bi-degree of the polynomial equations satisfied by the algebraic power series. 
\item We saw that both the sign testing of the sum of square roots and logarithms are specific examples of $\posslp$. In order to advance in understanding the complexity of $\posslp$, it is important to consider other special cases. For instance, the following are additional special cases of $\posslp$ that are worth exploring. 
            \begin{enumerate}
              \item Given positive integers $a,b,c,n$ in binary, determine the sign of $a^n+b^n-c^n$. To our knowledge, even this very special case of $\posslp$ remains open.
              \item Given an integer trinomial $f(x)\eqdef c_1 + c_2x^a_2 + c_3x^{a_3}\in \Z[x]$ and a rational $\frac{p}{q}$, determine the sign of $f\parc{\frac{p}{q}}$. This problem was posed in~\cite{KOIRAN2019151}. Significant progress was made on it in~\cite{Boniface2022TrinomialsAD} but the general form of the problem remains open.
            \end{enumerate}
            \end{enumerate}
\section{Acknowledgements}
 Louis Gaillard expresses gratitude to Peter B\"urgisser for hosting him during an internship at TU Berlin in the summer of 2022, where this research work was conducted.
Gorav Jindal was a member of Graduiertenkolleg "Facets of Complexity/Facetten der Komplexit\"at" (GRK 2434) and Institut f\"ur Mathematik, Technische Universit\"at Berlin, where most of the work was completed.
            \bibliographystyle{alpha} \bibliography{bibliography}    
\begin{thebibliography}{ABKPM09}

\bibitem[ABKPM09]{comp_num_analysis}
Eric Allender, Peter Buergisser, Johan Kjeldgaard-Pedersen, and Peter~Bro
  Miltersen.
\newblock On the complexity of numerical analysis.
\newblock {\em SIAM Journal on Computing}, 38(5):1987--2006, 2009.

\bibitem[Bak98]{baker1998logarithmic}
A.~Baker.
\newblock {\em Logarithmic forms and the abc- conjecture}, pages 37--44.
\newblock De Gruyter, Berlin, New York, 1998.

\bibitem[BCSS97]{SmaleRealCompu1997}
Lenore Blum, Felipe Cucker, Michael Shub, and Steve Smale.
\newblock {\em Complexity and Real Computation}.
\newblock Springer-Verlag, Berlin, Heidelberg, 1997.

\bibitem[BD10]{bostan2010wronskians}
Alin Bostan and Philippe Dumas.
\newblock Wronskians and linear independence.
\newblock {\em The American Mathematical Monthly}, 117(8):722--727, 2010.

\bibitem[BDR22]{Boniface2022TrinomialsAD}
Eric Boniface, Wei Deng, and J.~Maurice Rojas.
\newblock Trinomials and deterministic complexity limits for real solving,
  2022.

\bibitem[Bl{\"{o}}91]{blomer1993computing}
Johannes Bl{\"{o}}mer.
\newblock Computing sums of radicals in polynomial time.
\newblock In {\em 32nd Annual Symposium on Foundations of Computer Science, 1-4
  October 1991}, pages 670--677, San Juan, Puerto Rico, 1991. {IEEE} Computer
  Society.

\bibitem[Bl{\"o}98]{blomer1998probabilistic}
Johannes Bl{\"o}mer.
\newblock A probabilistic zero-test for expressions involving roots of rational
  numbers.
\newblock In Gianfranco Bilardi, Giuseppe~F. Italiano, Andrea Pietracaprina,
  and Geppino Pucci, editors, {\em Algorithms --- ESA' 98}, pages 151--162,
  Berlin, Heidelberg, 1998. Springer Berlin Heidelberg.

\bibitem[BNSW22]{balaji2022identity}
Nikhil Balaji, Klara Nosan, Mahsa Shirmohammadi, and James Worrell.
\newblock Identity testing for radical expressions.
\newblock In {\em Proceedings of the 37th Annual ACM/IEEE Symposium on Logic in
  Computer Science}, LICS '22, New York, NY, USA, 2022. Association for
  Computing Machinery.

\bibitem[Bre76]{brent1976fast}
Richard~P Brent.
\newblock Fast multiple-precision evaluation of elementary functions.
\newblock {\em Journal of the ACM (JACM)}, 23(2):242--251, 1976.

\bibitem[BS96]{antbachshallit1996}
Eric Bach and Jeffrey Shallit.
\newblock {\em Algorithmic Number Theory}.
\newblock MIT Press, Cambridge, MA, USA, 1996.

\bibitem[BW93]{BakerWustholz1993}
A.~Baker and G.~Wüstholz.
\newblock Logarithmic forms and group varieties.
\newblock {\em Journal für die reine und angewandte Mathematik}, 442:19--62,
  1993.

\bibitem[ESY14]{etessami2014note}
Kousha Etessami, Alistair Stewart, and Mihalis Yannakakis.
\newblock A note on the complexity of comparing succinctly represented
  integers, with an application to maximum probability parsing.
\newblock {\em {ACM} Trans. Comput. Theory}, 6(2):9:1--9:23, 2014.

\bibitem[GGJ76]{garey1976some}
M.~R. Garey, R.~L. Graham, and D.~S. Johnson.
\newblock Some np-complete geometric problems.
\newblock In {\em Proceedings of the Eighth Annual ACM Symposium on Theory of
  Computing}, STOC '76, page 10–22, New York, NY, USA, 1976. Association for
  Computing Machinery.

\bibitem[Kne75]{Kneser1975}
Martin Kneser.
\newblock Lineare abh\"angigkeit von wurzeln.
\newblock {\em Acta Arithmetica}, 26(3):307--308, 1975.

\bibitem[Koi19]{KOIRAN2019151}
Pascal Koiran.
\newblock Root separation for trinomials.
\newblock {\em Journal of Symbolic Computation}, 95:151--161, 2019.

\bibitem[KS12]{kayal2012sum}
Neeraj Kayal and Chandan Saha.
\newblock On the sum of square roots of polynomials and related problems.
\newblock {\em ACM Trans. Comput. Theory}, 4(4), nov 2012.

\bibitem[Mal01]{Malajovich2001AnEV}
Gregorio Malajovich.
\newblock An eective version of kronecker's theorem on simultaneous diophantine
  approximation.
\newblock Technical report, Technical report, City University of Hong Kong,
  2001.

\bibitem[QW06]{QIAN2006194}
Jianbo Qian and Cao~An Wang.
\newblock How much precision is needed to compare two sums of square roots of
  integers?
\newblock {\em Information Processing Letters}, 100(5):194--198, 2006.

\bibitem[Ser81]{serre1981quelques}
Jean-Pierre Serre.
\newblock Quelques applications du th{\'e}oreme de densit{\'e} de chebotarev.
\newblock {\em Publications Math{\'e}matiques de l'IH{\'E}S}, 54:123--201,
  1981.

\bibitem[SL96]{stevenhagen1996chebotarev}
Peter Stevenhagen and Hendrik~Willem Lenstra.
\newblock Chebotar{\"e}v and his density theorem.
\newblock {\em The Mathematical Intelligencer}, 18(2):26--37, 1996.

\bibitem[VV75]{voorhoeve1975wronskian}
M~Voorhoeve and A.J {Van Der Poorten}.
\newblock Wronskian determinants and the zeros of certain functions.
\newblock {\em Indagationes Mathematicae (Proceedings)}, 78(5):417--424, 1975.

\end{thebibliography}
\end{document}